\newtheorem{observation}[theorem]{Observation}
\newcommand{\algo}[1]{\ensuremath{\mbox{\textsc{#1}}}\xspace}
\newcommand{\alg}{\algo{Alg}}
\newcommand{\alga}{\algo{A}}
\newcommand{\opt}{\algo{opt}}
\newcommand{\optn}{\algo{opt-nice}}
\newcommand{\sol}{\algo{sol}}
\newcommand{\apx}{\algo{apx}}
\newcommand{\soln}{\algo{sol-nice}}
\newcommand{\pr}{\mbox{{\sc{nebc}}}}
\newcommand{\pra}{\mbox{{\sc{mep}}}}
\newcommand{\prb}{\mbox{{$\Delta$-\sc{nebc}}}}
\newcommand{\I}{{\cal I}}
\newcommand{\C}{{\cal C}}
\newcommand{\h}{{\cal H}}
\newcommand{\B}{{\cal B}}
\newcommand{\Z}{{\mathbb{Z}}}
\newcommand{\eps}{\varepsilon}
\begin{document}

\title{The near exact bin covering problem\thanks{This research was supported by a grant from the ISF, the Israel Science Foundation (grant number 308/18).}}

\author{Asaf Levin\inst{1}}
\institute{Faculty of
Industrial Engineering and Management, The Technion, 32000 Haifa,
Israel. \email{levinas@ie.technion.ac.il}. }\date{}

\maketitle


\begin{abstract} 
We present a new generalization of the bin covering problem that is known to be a strongly NP-hard problem.  In our generalization there is a positive constant $\Delta$, and we are given a set of items each of which has a positive size.  We would like to find a partition of the items into bins.  We say that a bin is near exact covered if the total size of items packed into the bin is between $1$ and $1+\Delta$. Our goal is to maximize the number of near exact covered bins. If $\Delta=0$ or $\Delta>0$ is given as part of the input, our problem is shown here to have no approximation algorithm with a bounded asymptotic approximation ratio (assuming that $P\neq NP$). However, for the case where $\Delta>0$ is seen as a constant, we present an asymptotic fully polynomial time approximation scheme (AFPTAS) that is our main contribution.
\end{abstract}

\section{introduction}
We are given a  parameter $\Delta>0$ (independent of the input).  The input to the {\sc near exact bin covering problem} (\pr) consists of a set of $n$ input items $\I=\{1,2,\ldots ,n\}$ where item $j$ is associated with its {\em size} $s_j\in (0,1]$.  A feasible solution is a partition of the items into subsets called bins.  For a given bin (in a fixed solution) we say that the bin is {\em near exact covered} if the total size of items in the bin is at least $1$ and strictly smaller than $1+\Delta$.  The goal function (also known as reward function or objective function) is the number of near exact covered bins.  Our problem \pr\ is to find a partition of the item set into bins so as to maximize the goal function. We refer to the variant of \pr\ where $\Delta$ is given as part of the input  as \prb.

We focus on the offline settings of the problem and study it with respect to the criterion of asymptotic approximation ratio defined as follows.  Given an algorithm $\alga$ and an input $I$ we denote by $\alga(I)$ the goal function value of the solution output by $\alga$. We denote by $\opt$ an (exponential-time) algorithm that computes an optimal solution for our problem.  We say that a polynomial time algorithm $\alg$ is a $\rho$ asymptotic approximation algorithm for \pr\ if $$\limsup_{I:\opt(I)\rightarrow \infty} \frac {\opt(I)}{\alg(I)} \leq \rho \ .$$  In many cases it is easier to establish a stronger upper bound on the reward of the algorithm, and it is well-known that if there exists a constant $C$ such that for every instance $I$ we have $$\opt(I)\leq \rho\cdot \alg(I)+C , $$ then $\alg$ is a $\rho$ asymptotic approximation algorithm.  We  refer to the value of $\rho$ as the asymptotic approximation ratio  of algorithm $\alg$.   An {\em asymptotic polynomial time approximation scheme} (APTAS) is a family of algorithms such that for every $\eps>0$, the family contains an algorithm $\alg_{\eps}$ that is an $1+\eps$ asymptotic approximation algorithm.  In this family the constant $C$ in the stronger definition of asymptotic approximation ratio may depend on $\eps$.  If we require that the time complexity upper bound on $\alg_{\eps}$ is polynomial of the input encoding length and $1/\eps$, then the family is referred to as an {\em asymptotic fully polynomial time approximation scheme} (AFPTAS).

Next, we describe the connection of \pr\ to the bin covering problem that was studied in the literature.
The special case of \pr\ where $\Delta=1$ is equivalent to the {\em bin covering problem}.  This equivalence means that a solution to one problem can be transformed to the other problem and vice versa.  The definition of bin covering allows bins of total size larger than $2$ (and they are still considered to be covered) but since the reward for such a bin is $1$, we can delete one item at a time (to be packed in a dedicated bin) from such a bin and create a solution to \pr\ with at least the same value.  On the other hand, a solution to \pr\ is a feasible solution for the bin covering problem with at least the same reward.  The bin covering problem was suggested in \cite{Ass1,Ass2}. They proved that
the greedy algorithm (which simply keeps putting items into the
same bin until it is covered and then moves on to the next bin)
has an asymptotic approximation ratio of $2$. This is best possible for online algorithms as shown by \cite{CT}. Moreover, two more
offline algorithms were derived with asymptotic
approximation ratios $\frac 32$ and $\frac43$, respectively.  Most relevant to our work
Csirik, Johnson, and Kenyon \cite{CJK01} designed an
elegant APTAS for bin covering.  The running time of this asymptotic scheme was improved into an AFPTAS by Jansen and Solis-Oba \cite{JS03}.  Since \pr\ is a generalization of the bin covering problem, our AFPTAS for \pr\ will use some of the techniques of \cite{CJK01,JS03}.  Additional algorithmic results established for the bin covering problem and variants of it appear in e.g. \cite{CF90,CFGR,WZ99,Ep01,EIL10,HS12,CFL14,Ja16,FR16,FR18,BEL18,BEJLMR19}.

For the bin covering problem we have the following motivating application. Assume that we have a food producer that sells its product as containers.  The product has a minimum weight per container, and the goal of the producer is to maximize the number of packed containers that meet this minimum weight constraint.  With this application in mind, the bin covering problem is the case where there is no upper limit on the weight of a container.  Obviously, the producer has certain packages used for packing the product that cannot be over-packed.  This requires an upper bound on the total weight of the packed containers, so we get an instance of \pr. 

In the bin packing literature where the total size of items packed into a bin must be at most $1$ we refer to \cite{FerLue81,KarKar82} for an AFPTAS. The variant where the cost of the bin depends on the total size of items packed into the bin, is called {\em bin packing with bin utilization cost}.  For this variant under the condition that the cost function is monotone non-decreasing there is an AFPTAS established in \cite{EL17}.  Note that in \pr\ the reward of a bin is not a monotone function of the total size of items packed in the bin.

\paragraph{Paper outline.} In Section \ref{sec:pre} we consider a variant of \pr\ where $\Delta=0$, and show that this variant does not admit an approximation algorithm with a bounded asymptotic approximation ratio.  This result also implies the same hardness of approximation for \prb\ as we show in Section \ref{sec:pre}.  Then, we turn our attention to designing an AFPTAS for \pr\ that is our main contribution.  We start our exposition of this result in Section \ref{sec:afptas} where we discuss some initial steps and mainly describe the main guessing step allowing the algorithm to partition the instance into two subproblems that can be solved independently.  Then, in Section \ref{I1sec} we approximate the first subproblem, and in Section \ref{I2sec} we approximate the second subproblem. The novelty of our scheme is mainly in the guessing step allowing the algorithm to partition the problem into two independent subproblems.

\section{Hardness of approximation of \prb \label{sec:pre}}
We define the {\sc Maximum Exact Partition} problem denoted as \pra. Specifically, \pra\  is the variant of \pr\ with $\Delta=0$.  Observe that the standard reduction from $3$-partition to bin packing implies that \pra\ is NP-hard in the strong sense.  Furthermore, the standard reduction from partition implies that in \pra\ it is NP-hard to distinguish between instances in which the optimal value is at least $2$ and instances in which the optimal value is zero.  Note that in instances of \pra\ where all items have sizes larger than $\frac 1{t+1}$ (for a constant value of $t$) there is a $\frac{t}{2}+\eps$ approximation algorithm for every $\eps>0$ by \cite{HurS}. Thus, the proof of our next inapproximability result needs to be based on instances where some of the items are small.

\begin{theorem}
Unless $P=NP$, for every constant value of $\rho$, problem \pra\ does not admit a polynomial time algorithm $\alg$ with an asymptotic approximation ratio that is at most $\rho$.
\end{theorem}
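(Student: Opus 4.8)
The plan is to reduce from the (NP-hard) \textsc{Partition} problem, amplifying the ``$2$ versus $0$'' gap recalled above into an arbitrarily large one by placing many copies of the gap instance at pairwise ``incompatible'' numerical scales, so that no covered bin can ever combine items from two different copies.

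Start from a \textsc{Partition} instance $a_1,\dots,a_n\in\Z_{>0}$ with $\sum_j a_j=2M$, asking whether some subset sums to $M$. Fix a constant $N$ (to be chosen at the very end, depending only on the putative algorithm) and find $N$ distinct primes $p_1,\dots,p_N$, each larger than $2M$ (possible in polynomial time). For $\ell=1,\dots,N$ let ``copy $\ell$'' consist of $n+2$ items, of sizes $\frac{a_1}{p_\ell},\dots,\frac{a_n}{p_\ell},\frac{p_\ell-M}{p_\ell},\frac{p_\ell-M}{p_\ell}$; all lie in $(0,1)$ since $0<M<2M<p_\ell$. The \pra-instance $I$ is the union of these $N$ copies. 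The role of the two padded items is that copy $\ell$ is a faithful re-encoding of \textsc{Partition} whose target sum is $p_\ell$ rather than $M$: a subset of copy $\ell$'s items has total size exactly $1$ iff its unscaled weight equals $p_\ell$; writing such a subset as a subset $A'\subseteq\{a_j\}$ together with $c\in\{0,1,2\}$ padded items gives $\sum_{j\in A'}a_j=(1-c)p_\ell+cM$, and since $p_\ell>2M=\sum_j a_j$ the cases $c=0$ and $c=2$ are impossible, leaving exactly $c=1$ and $\sum_{A'}a_j=M$. Hence, if \textsc{Partition} is a yes-instance each copy splits into two size-$1$ bins (a subset summing to $M$ and its complement, each with one padded item), while if it is a no-instance no copy contains a single size-$1$ bin.

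The heart of the argument is that no bin of total size exactly $1$ can use items from two or more copies. Suppose such a bin $B$ did, and let $L$ (with $|L|\ge 2$) be the set of copies it meets. For $\ell\in L$, the partial weight contributed by copy $\ell$ is a sum of fractions whose reduced denominator is $p_\ell$ (each $a_j$ and each $p_\ell-M$ is coprime to the prime $p_\ell$, as $0<a_j<p_\ell$ and $0<M<p_\ell$), so its $p_\ell$-adic valuation is $\ge -1$, while the sum of the contributions of the other copies in $L$ has non-negative $p_\ell$-adic valuation (its denominator is a product of primes $\ne p_\ell$). Thus if the contribution of copy $\ell$ were not an integer, the total, namely $1$, would fail to be a $p_\ell$-integer --- impossible. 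So every copy's contribution to $B$ is a positive integer, and the total is $\ge |L|\ge 2>1$, a contradiction. Consequently covered bins never cross copy boundaries, and $\opt(I)=2N$ if \textsc{Partition} is a yes-instance and $\opt(I)=0$ otherwise. This non-interaction step is the main obstacle: taking plain identical copies of the $2$-versus-$0$ instance fails, because the pooled instance typically still admits $\Omega(N)$ disjoint exact bins (for instance when the half-target is a nontrivial nonnegative integer combination of the item sizes), so one genuinely needs the coprime-denominator (prime-scaling) device --- and it is exactly this device that introduces the small items the preceding remark calls for.

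Finally, suppose \pra\ admitted a polynomial-time algorithm $\alg$ of asymptotic approximation ratio at most $\rho$. By definition there is a threshold $t$ with $\alg(J)\ge \opt(J)/(\rho+1)$ for every $J$ with $\opt(J)\ge t$. Choose the constant $N$ so large that $2N\ge t$ and $2N\ge \rho+1$. On the instance $I$ built above: if \textsc{Partition} is a yes-instance then $\opt(I)=2N\ge t$, so $\alg(I)\ge 2N/(\rho+1)\ge 1$ (and $\alg(I)$ is an integer); if it is a no-instance then $\alg(I)\le \opt(I)=0$. Hence $\alg$ decides \textsc{Partition} in polynomial time, so $P=NP$, proving the theorem.
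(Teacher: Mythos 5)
Your proof is correct and shares the paper's high-level strategy---amplify the ``$2$ versus $0$'' gap for \textsc{Partition} by juxtaposing many rescaled copies of the instance and showing a covered bin cannot mix items from different copies---but the scale-separation mechanism is genuinely different. The paper scales generation $j$ by $(3B)^{-j}$, a geometric chain, and rules out cross-generation bins by a magnitude-plus-integrality argument: a size-$1$ bin contains exactly one item larger than $\frac 12$, items of earlier generations are individually too large to coexist with it, and items of later generations have total size strictly below $(3B)^{-j}$, which is incompatible with the fact that generation-$j$ items are all integer multiples of $(3B)^{-j}$. You instead rescale copy $\ell$ by a distinct prime $p_\ell>2M$ and rule out cross-copy bins via a $p$-adic valuation argument: coprimality of the denominators forces each participating copy's contribution to a total of $1$ to be a positive integer, so at most one copy can contribute, and within a copy the two padding items of size $(p_\ell-M)/p_\ell$ make a size-$1$ bin equivalent to a \textsc{Partition} solution. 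Your version is arguably cleaner---a single number-theoretic observation replaces the case analysis on generations and the remainder pigeonhole---and it yields $2N$ covered bins per YES instance (two per copy, using both a solution and its complement) versus the paper's $T$ (one per generation); the paper's geometric scaling, in turn, avoids needing to produce primes, though that step of yours is harmless (one needs only a constant number of primes above $2M$, found in polynomial time by Bertrand's postulate and primality testing). Both constructions are polynomially sized and both reductions are valid.
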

\begin{proof}
Assume by contradiction that the claim does not hold for a constant $\rho$. So there is a polynomial time algorithm $\alg$ for \pra\ with an asymptotic approximation ratio that is not larger than $\rho$.  Without loss of generality, $\rho$ is an integer (otherwise, we can increase $\rho$ to an integer).  By definition of $\limsup$, we know that there is a positive integer $T$ such that if $\opt(I)\geq T$, then $\alg(I) \geq \frac{\opt(I)}{\rho+1}\geq \frac{T}{\rho+1}$, whereas if $\opt(I)=0$, then $\alg(I)=0$.   Note that $T$ is a constant defined by the performance guarantee of $\alg$. Thus it suffices to show that deciding if $\opt(I)$ is exactly zero is NP-complete even if we assume that if $\opt(I)>0$ then $\opt(I)\geq T$.  We show this claim via reduction from the partition problem.

Let $a_1,a_2,\ldots ,a_n$ be positive integers such that $\sum_{i=1}^n a_i =2B$ where $B$ is an integer.  The partition problem asks if there is a subset $S\subseteq \{ 1,2,\ldots ,n\}$ such that $\sum_{i\in S} a_i=B$.  The partition problem is known to be NP-complete \cite{GJ}.  Given such an instance to the partition problem we consider the following instance to \pra.

For $j=1,2,\ldots ,T$, we have $n+1$ items which will be referred to as {\em generation $j$ items}.  The first $n$ items of generation $j$ have sizes $$a_{i,j}=\frac{a_i}{(3B)^j} \ \ , \  i=1,2,\ldots ,n$$ and the last item of generation $j$ has size $$a_{n+1,j}=1-\frac{B}{(3B)^j} \ .$$ We say that item $(i,j)$ is the $i$-th item of generation $j$ (the one of size $a_{i,j}$).  This defines the instance of \pra.  Note that $a_{i,j}>1/2$ if and only if $i=n+1$ (for all $j$).  We argue next the following two claims. The first claim is that if the partition instance is a YES instance, then the optimal solution value of the instance of \pra\ is at least $T$. The second claim is that if the partition instance is a NO instance, then every feasible solution of the instance of \pra\ has a zero reward.

Assume first that the partition instance is a YES instance. Let $S \subseteq \{ 1,2,\ldots,n\}$ be an index subset such that $\sum_{i\in S} a_i=B$.  For every generation $j$, the items of generation $j$ are packed as follows.  We have one {\em exact bin} of generation $j$ with the items $(i,j)$ for all $i\in S\cup\{ n+1\}$.  The other items of generation $j$ are packed into dedicated bins (one bin per item).  The total size of the items in the exact bin of generation $j$ is $$\sum_{i\in S\cup \{ n+1\}}a_{i,j}= \sum_{i\in S} \frac{a_i}{(3B)^j} + 1-\frac{B}{(3B)^j} = 1$$ where the last equality holds as $\sum_{i\in S} a_i=B$. So indeed the objective function value of our solution for \pra\ is at least $T$.   

Next, assume that the partition instance is a NO instance. Assume by contradiction that there is a feasible solution with at least one bin denoted as $\B$ whose items have total size exactly $1$.  Consider the set of items packed into $\B$ then by definition it has at most one item of size larger than $1/2$.  However,  since $\sum_{j=1}^T \sum_{i=1}^n a_{i,j}<1$, $\B$ must have exactly one item of size larger than $1/2$.  Assume that it is item $(n+1,j)$, that is, the last item of generation $j$.

We claim that all items packed in $\B$ are of generation $j$.  First assume that there is an item $(i,j')$ of generation $j'<j$ packed into $\B$, then $a_{i,j'}\geq \frac{1}{(3B)^{j'}}> 1-a_{n+1,j}$. So the total size of items in $\B$ is strictly larger than $1$ that contradicts our assumption on $\B$.  Thus, all items packed into $\B$ are of generation at least $j$.  Next, observe that all items of generations strictly larger than $j$ which are smaller than $1/2$ have total size  $$\sum_{j''=j+1}^T\sum_{i=1}^n a_{i,j''}= \sum_{j''=j+1}^T \frac{2B}{(3B)^{j''}}<\frac{1}{(3B)^j} \ .$$ However, the size of every item of generation $j$ is an integer multiple of $\frac{1}{(3B)^j}$.  Thus,  $\B$ must contain items of a common generation $j$.  We let $$S=\{ i: (i,j)\mbox{ is packed into bin } \B \} \setminus \{ n+1\} \ . $$  By definition of $\B$ and $S$, we conclude that $\sum_{i\in S} a_i= (3B)^j \sum_{i\in S} a_{i,j} =B$. Therefore, the partition instance is a YES instance after all, contradicting our assumption.  Therefore, such a bin $\B$ may not exist.
\qed\end{proof}

Observe that in the last proof, if we use $\Delta=\frac{1}{(3B)^{T+1}}$, we get an instance of \prb\ where the binary encoding length of $\Delta$ is upper bounded by a polynomial in the binary encoding length of the items in this instance. In the resulting instance of \prb\ whenever a near exact covered bin exists it must have items of total size exactly $1$.  This property holds as all items have sizes that are integer multiples of $\frac{1}{(3B)^{T}}$. Therefore, the last hardness of approximation claim holds even for \prb.  We summarize this conclusion in the following theorem.

\begin{theorem}
Unless $P=NP$, for every constant value of $\rho$, problem \prb\ does not admit a polynomial time algorithm with an asymptotic approximation ratio that is at most $\rho$.
\end{theorem}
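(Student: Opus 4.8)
The plan is to recycle, essentially verbatim, the reduction from \textsc{partition} constructed in the proof of the previous theorem, and merely equip the produced instance with a carefully chosen positive value of $\Delta$ so that the \pra\ instance there becomes an equivalent instance of \prb. As before, assume for contradiction that \prb\ admits a polynomial time algorithm $\alg$ whose asymptotic approximation ratio is at most a constant $\rho$, which we may take to be an integer. By the definition of $\limsup$ there is a constant threshold $T$ (depending only on the performance guarantee of $\alg$, hence a constant) such that $\opt(I)\geq T$ forces $\alg(I)>0$, while $\opt(I)=0$ forces $\alg(I)=0$.

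Given a \textsc{partition} instance $a_1,\dots,a_n$ with $\sum_i a_i=2B$, I would build exactly the same collection of $T(n+1)$ items organized into $T$ generations as in the earlier proof, and in addition set $\Delta=\frac{1}{(3B)^{T+1}}$. The first thing to check is that this is still a polynomial time reduction: the binary encoding length of $\Delta$ is $O((T+1)\log(3B))$, which is polynomial in the encoding length of the \textsc{partition} instance because $T$ is a fixed constant; everything else is copied from the previous construction.

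The core observation is that for this particular $\Delta$, a bin is near exact covered (total size in $[1,1+\Delta)$) in the resulting \prb\ instance if and only if its total size equals exactly $1$. Indeed, every item size is an integer multiple of $\frac{1}{(3B)^{T}}$, and therefore so is the total size of the items in any bin; but the only integer multiple of $\frac{1}{(3B)^{T}}$ lying in the half-open interval $[1,\,1+\frac{1}{(3B)^{T+1}})$ is $1$ itself, since the next such multiple, $1+\frac{1}{(3B)^{T}}$, already exceeds $1+\Delta$. Hence in every feasible solution of the \prb\ instance the set of near exact covered bins coincides with the set of exactly covered bins analyzed in the previous proof, so the optimal value of the \prb\ instance is at least $T$ when the \textsc{partition} instance is a YES instance and is zero when it is a NO instance.

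Running $\alg$ on the constructed \prb\ instance and answering YES precisely when $\alg$ returns a solution of positive value then decides \textsc{partition} in polynomial time, contradicting $P\neq NP$. The only step needing any care is confirming that $\Delta$ has polynomial encoding length, which is immediate from the constancy of $T$, so beyond transcribing the earlier reduction there is no genuine obstacle here; the proof is a short corollary of the previous theorem.
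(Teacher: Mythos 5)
Your proposal is correct and matches the paper's own argument essentially verbatim: both set $\Delta=\frac{1}{(3B)^{T+1}}$, observe the polynomial encoding length, and note that since every item size (hence every bin total) is an integer multiple of $\frac{1}{(3B)^T}$, the only total in $[1,1+\Delta)$ is exactly $1$, so the earlier reduction carries over unchanged.
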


\section{The initial steps of the AFPTAS for \pr\label{sec:afptas}}
Let $\eps>0$ be such that we would like to get an algorithm with an asymptotic approximation ratio $(1+\eps)^c$ for a constant $c$ for \pr\ whose time complexity is upper bounded by a polynomial in the input encoding length and in $\frac 1{\eps}$.  Without loss of generality we assume that $\eps\leq \frac{1}{12}$ and that $\frac{1}{\eps}$ is an integer. We define $\delta>0$ to be the largest value satisfying $\delta \leq  \frac{\Delta}{4}$ for which $\frac{1}{\delta}$ is an integer.  Then $\delta\geq \frac{\Delta}{8}$ and since $\Delta$ is a positive constant, so does $\delta$.

Our scheme has an initial item classification step followed by a characterization of near optimal solutions.  This characterization motivates a guessing step that basically separates the input into two parts.  One part of the input  is handled using the methods of \cite{FerLue81,KarKar82} developed for the bin packing problem (see Section \ref{I1sec}). Whereas the second part of the input is tackled using the methods of \cite{CJK01,JS03} for the bin covering problem (see Section \ref{I2sec}).  The novelty of our approach lies in the characterization of the near optimal solution together with the resulting guessing step.  We turn our attention to the description of the necessary background for presenting the characterization of the optimal solution.

\subsection{The initial classification of items into huge and non-huge items}
We say that an item $j$ whose size is $s_j$ is a {\em huge item} if $s_j\geq \delta$, and otherwise it is a {\em non-huge item}.  The set of huge items is denoted as $\h$, the set of all items is denoted as $\I$, so the set of non-huge items is $\I \setminus \h$.

We partition $\h$ into classes based on the following rule.  The item set $\h_{\psi}$ (for $\psi\in \{0,1,\ldots, \frac{1}{\delta^3}-\frac{1}{\delta^2}$) is defined as $$\h_{\psi}=\{ j\in \h: \delta + \psi\cdot \delta^3\leq s_j < \delta+ (\psi+1)\cdot \delta^3\} \ .$$ Observe that this is indeed a partition of $\h$, and we say that $\h_{\psi}$ is {\em class $\psi$ of huge items} that consists of $|\h_{\psi}|$ items.  The index set of classes is denoted as $\Psi$.  Furthermore, we assume that each such class is sorted in a non-decreasing order of sizes of items in this class breaking ties based on decreasing indexes. For example, when we say the $5$-th item in the class we refer to this sorting.

To motivate this classification, consider a bin $\B$ whose items have total size between $1$ and $1+\Delta$.  Consider the operation of replacing its huge items by another set of huge items such that for every class of huge items, the new set has the same number of items as $\B$ used to have in the original solution.  After applying this operation, the total size of items in $\B$ changes by at most $2\delta^2$.  This holds as $\B$ has at most $\frac{2}{\delta}$ huge items and each of which is replaced by an item whose size differ by at most $\delta^3$.  Furthermore, the number of classes of huge items is a constant.

\subsection{Characterization of nice solutions}

Next, we would like to show that every feasible solution $\sol$ for \pr\ can be transformed into a new solution  $\soln$ satisfying the following.  $\soln$ has an objective function value not significantly smaller than the objective function value of $\sol$ and $\soln$ has the characterization named being a {\em nice solution} with a {\em certificate vector}.  This characterization enables the next guessing step  of our scheme. 

\begin{definition}
A solution $\soln$ for \pr\ is a nice solution with a certificate vector $(v_{\psi},u_{\psi})_{\psi\in \Psi}$ if there is a partition of the near exact covered bins in $\soln$ into two sets called type $1$ and type $2$ bins such that the following properties hold.
\begin{enumerate}
\item\label{nice1} Every type 1 bin $\B$ in $\soln$ has items of total size at least $1+\delta$ and less than $1+\Delta$.  Every type 2 bin $\B$ in $\soln$ has items of total size at least $1$ and at most $1+3\delta$.
\item\label{nice2} Type 1 bins in $\soln$ do not contain non-huge items.
\item\label{nice3} For  every class $\psi\in \Psi$ the following holds. For every type 1 bin $\B$, the bin $\B$ may contain an item of the class only if the item is among the last $v_{\psi}$ items of the class. Similarly, every type 2 bin may contain an item of the class only if the item is among the first $u_{\psi}$ items of the class.
\item\label{nice4} For every $\psi\in \Psi$, we have $|\h_{\psi}| \geq u_{\psi}+v_{\psi}$. Moreover, both $u_{\psi}$ and $v_{\psi}$ are integer powers of $1+\eps\delta^3$ rounded down to the next integer, that is, $$u_{\psi},v_{\psi} \in \left\{ \left\lfloor \left( 1+\eps\cdot \delta^3\right)^t \right\rfloor : t\in \Z\right\},\ \ \forall \psi\in \Psi .$$
\item\label{nice5} Last, for every $\psi\in \Psi$, the following two conditions are satisfied. The number of items of $\h_{\psi}$ packed into near exact covered bins in $\soln$ of type $1$ is at least $\frac{v_{\psi}}{2}$. Furthermore, the number of items of the class packed into near exact covered bins of type $2$ in $\soln$ is at least $\frac{u_{\psi}}{2}$.
\end{enumerate}
\end{definition}

\begin{lemma}
Given a feasible solution $\sol$, there exists another feasible solution $\soln$ with some certificate vector $(v_{\psi},u_{\psi})_{\psi\in \Psi}$ that is a nice solution whose objective function value is at least $(1-2\eps)$ times the objective function value of $\sol$.
\end{lemma}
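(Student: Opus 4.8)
The plan is to start from an arbitrary feasible solution $\sol$, discard everything except the near exact covered bins, and then perform a sequence of local modifications, losing only a $(1-2\eps)$ factor in total, until the resulting solution matches every clause of the definition of a nice solution. I would organize the transformation into the following stages. First, a \emph{type split}: call a near exact covered bin of $\sol$ a type 1 bin if its total size is at least $1+\delta$, and a type 2 bin otherwise (so its size is in $[1,1+\delta)$). This immediately gives a partition of the covered bins; clause~\ref{nice1} is not yet satisfied exactly, but the bound on type 2 bins will be arranged later, and the upper bound $1+\Delta$ on type 1 bins already holds since the bin was near exact covered. Second, a \emph{huge-rounding step for type 1 bins}: in every type 1 bin, remove all non-huge items (pack them aside into dedicated bins — they are irrelevant to near exact covering of type 1 bins and only help elsewhere), and replace the huge items class by class with the same multiplicities, always picking the \emph{last} available items of each class. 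By the estimate highlighted in the classification subsection, each bin's total size changes by at most $2\delta^2 \le \delta/2$ (using $\delta \le \Delta/4$ and that $\delta$ is small), so a type 1 bin of size $\ge 1+\delta$ stays above $1$; actually one must be slightly more careful to keep it above $1+\delta$, which is why I would only reassign within bins whose size exceeds $1+\delta+2\delta^2$ and move the handful of borderline type 1 bins (size in $[1+\delta, 1+\delta+2\delta^2)$) to type 2 or discard them — this discards at most a $2\eps$-fraction only if there are many such bins, which needs a counting argument, or more cleanly, one accounts for the loss here as part of the global $(1-2\eps)$ budget. This gives clauses~\ref{nice2} and the ``last $v_\psi$ items'' half of clause~\ref{nice3}.

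Third, a symmetric \emph{truncation step for type 2 bins}: type 2 bins are handled by the bin-covering style argument. Here the reward per bin is fixed, so I can afford to shrink bins: sort the covered type 2 bins and, for each huge class, reindex which physical items go into type 2 bins so that they are the \emph{first} $u_\psi$ items of the class, again by a multiplicity-preserving exchange within type 2 bins — but now I must ensure disjointness from the ``last $v_\psi$'' items claimed by type 1, i.e. $u_\psi + v_\psi \le |\h_\psi|$ (clause~\ref{nice4}, first part). This is the point where the two halves interact: the total number of items of class $\psi$ used by type 1 and type 2 bins together is at most $|\h_\psi|$ trivially since they are disjoint in $\sol$, and the exchange only relabels, so the inequality is maintained. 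After this step I would define $u_\psi$ and $v_\psi$ as the actual counts of class-$\psi$ items used (in a suitable sense) and then \emph{round them down} to the nearest integer power of $1+\eps\delta^3$; decreasing $u_\psi$ or $v_\psi$ only removes items from some bins, and a bin loses at most a $1-1/(1+\eps\delta^3) \le \eps\delta^3$ fraction of its at most $2/\delta$ huge items per class times the constant number of classes — one checks this total decrease in size is at most, say, $\delta$, so that, combined with a small slack one keeps in reserve, covered bins stay covered. This is clause~\ref{nice4}.

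Fourth, the \emph{balancing/halving step} for clause~\ref{nice5}: after the rounding, the number of class-$\psi$ items actually sitting in type 1 covered bins might have dropped below $v_\psi$, and similarly for type 2 and $u_\psi$. The fix is to decrease $v_\psi$ (resp. $u_\psi$) further, to the largest admissible power of $1+\eps\delta^3$ not exceeding twice the realized count; since we decreased $v_\psi$ by at most a $1+\eps\delta^3$ factor per rounding and the realized count is at least the count before rounding minus the removed items, one gets that the realized count is at least $v_\psi/2$ for the final $v_\psi$, giving the ``at least $v_\psi/2$'' and ``at least $u_\psi/2$'' bounds. One also needs $|\h_\psi| \ge u_\psi + v_\psi$ to be preserved under these decreases, which is automatic since we only decrease. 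Finally, clause~\ref{nice1}'s remaining requirement — type 2 bins have total size at most $1+3\delta$ — is enforced by a \emph{peeling step}: if a type 2 bin has size exceeding $1+3\delta$, repeatedly remove a single item (huge or non-huge); since every item has size at most $1$, at the moment the bin first drops to size $< 1+3\delta$ it still has size $\ge 1+2\delta \ge 1$, wait — more carefully, removing a non-huge item (size $<\delta$) from a bin of size in $[1+3\delta, 1+4\delta)$ keeps it $\ge 1+2\delta$, and if no non-huge item is available all items are huge and one can be smarter; in any case one lands in $[1, 1+3\delta)$, and the removed items are placed in their own dedicated bins or simply discarded. The main obstacle I anticipate is the \emph{bookkeeping of the $(1-2\eps)$ budget}: each of the steps above (borderline type 1 bins, the rounding of $u_\psi,v_\psi$, the halving, the peeling) potentially destroys or discards some covered bins, and one must argue that the \emph{total} number of destroyed covered bins is at most $2\eps \cdot \sol(I)$ — equivalently, that each step individually destroys at most a small fraction, which requires that whenever a step would destroy many bins there were correspondingly many bins to begin with, so the losses can be charged. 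Making this charging precise — in particular handling the interaction where the same bin could be touched by several steps — is where the real work lies; the geometric structure of the classification (constantly many classes, at most $2/\delta$ huge items per bin, item sizes in $(0,1]$) is exactly what makes every individual size-perturbation bounded by a small multiple of $\delta$, and hence recoverable using the slack built into the definitions of type 1 ($\ge 1+\delta$) versus type 2 ($\le 1+3\delta$) bins.
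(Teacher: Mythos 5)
Your overall plan follows the same contour as the paper's proof (type split, class-by-class exchange to force type 1 bins onto one fixed subset of each class and type 2 onto another, then rounding the certificate entries to powers of $1+\eps\delta^3$ and charging the losses), but it has two genuine gaps that the paper resolves differently, and you flag but do not close them.

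The first gap is your choice of threshold. You split at size $1+\delta$, and then observe correctly that the class-by-class exchange can shift a type 1 bin's size down by $2\delta^2$, so a bin in $[1+\delta,1+\delta+2\delta^2)$ can leave the required window $[1+\delta,1+\Delta)$. Your suggested repairs do not work as stated. ``Discard the borderline bins, this is at most a $2\eps$-fraction'' has no counting argument behind it: nothing in the instance forbids \emph{all} type 1 bins from having size in $[1+\delta,1+\delta+2\delta^2)$, so the loss can be everything. ``Move them to type 2'' is not obviously safe either, since those bins consist only of huge items and the type 2 exchange shifts sizes in the opposite direction, and it silently changes which class counts belong to $v_\psi$ versus $u_\psi$, reopening the disjointness and lower-bound constraints. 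The paper avoids all of this by stripping non-huge items from bins of size $\geq 1+\delta$ and then splitting at $1+2\delta$, so both types have a $\delta$-sized slack in the direction the exchange can move them: type 1 can drop from $\geq 1+2\delta$ to $\geq 1+\delta$ and type 2 can rise from $<1+2\delta$ to $<1+3\delta$ after a change of at most $2\delta^2$.

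The second and larger gap is your treatment of the rounding of $u_\psi, v_\psi$. You write that ``decreasing $u_\psi$ or $v_\psi$ only removes items from some bins, and a bin loses at most an $\eps\delta^3$ fraction of its at most $2/\delta$ huge items,'' concluding that the per-bin size drop is small. This is not the right picture. Decreasing a count $\hat v_\psi$ to $v'_\psi$ forces $\hat v_\psi - v'_\psi$ physical items of that class out of type 1 bins; these items are not spread as an $\eps\delta^3$-fraction of each bin, and in the near exact bin covering objective you cannot remove a few items from a bin anyway, because the bin then generically falls below total size $1$ and loses its whole unit of reward. The correct move, and the one the paper makes, is to delete \emph{entire bins} (repacking their contents into dedicated bins), sorted in non-increasing order by how many class-$\psi$ items they hold, until the remaining type 1 count drops to $v'_\psi$; the number of bins destroyed per class and per type is at most an $\eps\delta^3$-fraction of the bins that contain items of that class, which, summed over the $O(1/\delta^3)$ classes and the two types, gives the global $2\eps$ loss. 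Your closing paragraph correctly anticipates that ``making this charging precise is where the real work lies,'' but the paper's specific mechanism — destroying whole bins, charged per class against the bins of that class — is the piece you are missing, and without it your fourth stage (and the $u_\psi/2$, $v_\psi/2$ lower bounds) does not follow.
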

\begin{proof}
Consider $\sol$.  For every  bin $\B$ of $\sol$ that is not near exact covered, we pack every item of the bin in a dedicated bin. This operation does not decrease the objective function value of the solution.  Next, if a bin $\B$ of the resulting solution is near exact covered, but has items of total size at least $1+\delta$, we can assume without loss of generality that this bin has only huge items.  This is so as otherwise we remove one non-huge item at a time and pack this item into a dedicated bin until the first point in time where either the total size of the remaining items in $\B$ is smaller than $1+\delta$, or there are no further non-huge items in $\B$.  We apply this operation as long as there are such near exact covered bins.  This process does not decrease the number of near exact covered bins and create a solution satisfying the required property.  Let $\sol'$ be the resulting solution.

Next we partition the set of near exact covered bins in $\sol'$.  We say that a near exact covered bin $\B$ is a type $1$ bin if the total size of its items is at least $1+2\delta$ and otherwise it is a type $2$ bin.  Note that at this point $\sol'$ satisfies the first two properties of nice solutions.  In order to obtain a nice solution we apply the following procedure for every class $\psi$ of huge items.

For every $\psi$, we count the number of huge items of class $\psi$ that $\sol'$ packs into bins of type $1$ denoted as $\hat{v}_{\psi}$ and the number of such items that $\sol'$ packs into bins of type $2$ denoted as $\hat{u}_{\psi}$.  Our next step is to form an initial certificate vector $(v'_{\psi},u'_{\psi})_{\psi\in \Psi}$. This initial certificate vector will be modified later to the certificate vector that we will use for the proof of the lemma.  The value of $v'_{\psi}$ ($u'_{\psi}$, respectively) is the largest value in the set $\left\{ \left\lfloor \left( 1+\eps \cdot \delta^3\right)^t \right\rfloor : t\in \Z\right\}$ that is at most $\hat{v}_{\psi}$ (at most $\hat{u}_{\psi}$, respectively).  Note that $\hat{v}_{\psi}+\hat{u}_{\psi} \leq |\h_{\psi}|$.  By definition, $v'_{\psi}\leq \hat{v}_{\psi}$ and $u'_{\psi}\leq \hat{u}_{\psi}$ so $v'_{\psi}+u'_{\psi} \leq |\h_{\psi}|$ holds for all $\psi\in \Psi$. Our certificate vector will be component-wise not larger than the initial certificate vector so this required property will be satisfied.  

We apply the following process in which we process every class $\h_{\psi}$ of huge items.  In this process we say we delete a near exact covered bin and we mean that all its items are packed into dedicated bins and the number of near exact covered bins is decreased by $1$.  If $v'_{\psi}< \hat{v}_{\psi}$ we delete up to  $\eps\delta^3$ times the number of bins of type $1$ in $\sol'$ containing items of this class where we delete one such bin at a time until the first time when  the number of items in such non-deleted bins decreases to be at most $v'_{\psi}$ where the bins are sorted in non-increasing order of items of this class.   Similarly, if $u'_{\psi}< \hat{u}_{\psi}$, we delete up to  $\eps\delta^3$ times the number of bins of type $2$ in $\sol'$ containing items of this class where we delete one such bin at a time until the first time when the number of items of $\h_{\psi}$ in such non-deleted bins decreases to be at most $u'_{\psi}$ where the bins are sorted in non-increasing order of items of this class. Then, we move to the next class of huge items.  Note that at later iterations we may delete additional bins containing items of the class. We denote by $\sol''$ the resulting solution. Observe that whenever we process a class, we may delete up to $2\eps\delta^3$ times the number of near exact covered bins in $\sol'$. So in total we may decrease the objective function value by at most $2\eps$ times the objective function value of $\sol'$.  In later steps we will not decrease the number of near exact covered bins so the output solution will have an objective function value that is at least $(1-2\eps)$ times the objective function value of $\sol$.

At the end of this step, we set $v_{\psi}$ ($u_{\psi}$) to be the smallest value in the set $\left\{ \left\lfloor \left( 1+\eps \cdot\delta^3\right)^t \right\rfloor : t\in \Z\right\}$ that is at least the number of items of class $\psi$ contained in (remaining) bins of type $1$ (and $2$, respectively).  Observe that this last step cannot increase the values of the certificate vector, so properties 4,5 in the definition of nice solutions will be satisfied using this certificate vector.

Then we identify the set $S(\psi,1)$ of $v_{\psi}$ last items of class $\psi$. Whenever a type $1$ bin contains items of this class that do not belong to $S(\psi,1)$, we remove such items from the bin leaving space for items of class $\psi$ and add items from $S(\psi,1)$ to this bin (among the items which are not used to be packed in near exact covered bins of type $1$).  Similarly, we let $S(\psi,2)$ be the set of $u_{\psi}$ first items of class $\psi$. Whenever a type $2$ bin contains items of this class that do not belong to $S(\psi,2)$, we remove such items from the bin leaving space for items of class $\psi$ and add items from $S(\psi,2)$ to this bin (among the items which are not used to be packed in near exact covered bins of type $2$).  In both cases the number of items of each class in every bin is left without modification.

This process does not hurt the properties 2,4,5. Property 3 is trivially satisfied and it remains to take care of the first property.  We denote by $\soln$ the resulting solution and we consider the first property.  Before this last step occurred every bin of type $1$ had items of total size at least $1+2\delta$ and less than $1+\Delta$, and we replace up to $\frac{2}{\delta}$ huge items by other huge items of a common class.  Each such replacement can only decrease the total size of items in such bin and such a decrease is not larger than $\delta^3$.  Therefore, the total size of items in such bin is at least $1+2\delta-2\delta^2 > 1+\delta$ and at most $1+\Delta$.  Similarly every bin of type $2$   had items of total size at most $1+2\delta$ and at least $1$, and we replace up to $\frac{2}{\delta}$ huge items by other huge items of a common class.  Each such replacement can only increase the total size of items in such bin and such an increase is not larger than $\delta^3$.  Therefore, the total size of items in such bin is at least $1$ and at most $1+2\delta+2\delta^2 < 1+3\delta$.  So the first property of nice solutions is satisfied as well.
\qed\end{proof}

\subsection{Guessing the certificate vector and partitioning the input into independent problems}

We fix a nice solution whose objective function value is maximized.  We let $\optn$ be this solution.  Our guessing step is to guess the certificate vector $(u_{\psi},v_{\psi})_{\psi\in \Psi}$ corresponding to $\optn$ (or one such certificate vector if its identity is not well-defined).  By the term guess, we mean that the next steps of our scheme will be applied for every possible value of the guessed certificate vector, each of which will lead to a feasible solution, and at the end of the scheme we will output the solution with the maximum objective function value breaking ties arbitrarily.  In the analysis of our scheme we analyze the iteration of this exhaustive enumeration in which we have used the vector corresponding to $\optn$.  We next show that this exhaustive enumeration has a polynomial number of iterations for fixed $\Delta$.

\begin{lemma}
The number of iterations of the exhaustive enumeration loop implied by the guessing step is $O\left( (\frac{\log n}{\eps})^{O(1/\delta^3)} \right)$. 
\end{lemma}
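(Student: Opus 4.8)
The plan is to bound the number of admissible certificate vectors by counting, for each class $\psi \in \Psi$, how many distinct values the pair $(u_\psi, v_\psi)$ can take, and then raising that count to the power $|\Psi|$, which is a constant depending only on $\delta$ (hence only on $\Delta$). First I would observe that property~\ref{nice4} of a nice solution forces each of $u_\psi$ and $v_\psi$ to lie in the set $\{\lfloor (1+\eps\delta^3)^t \rfloor : t \in \Z\}$, and moreover each is at most $|\h_\psi| \le n$. So the only relevant exponents $t$ are the nonnegative ones with $(1+\eps\delta^3)^t \le n$ (together with $t$ small enough to give the value $0$ or $1$, which contributes only an additive constant number of choices). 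Taking logarithms, such $t$ satisfy $t \le \frac{\ln n}{\ln(1+\eps\delta^3)}$, and since $\ln(1+\eps\delta^3) \ge \eps\delta^3 / 2$ for $\eps\delta^3$ small, we get $t = O\!\left(\frac{\log n}{\eps\delta^3}\right)$. Hence each of $u_\psi, v_\psi$ ranges over $O\!\left(\frac{\log n}{\eps\delta^3}\right)$ possible values, so the pair $(u_\psi, v_\psi)$ has $O\!\left(\left(\frac{\log n}{\eps\delta^3}\right)^2\right) = O\!\left(\left(\frac{\log n}{\eps}\right)^{2}\right)$ possibilities once $\delta$ is treated as a constant absorbed into the $O(\cdot)$ notation only through the exponent bound below — more carefully, I would keep $\delta$ explicit and write the per-class count as $O\!\left(\frac{\log n}{\eps}\right)^2$ times a constant factor depending on $\delta$.

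Next I would multiply over all classes. Since $\Psi = \{0, 1, \ldots, \frac{1}{\delta^3} - \frac{1}{\delta^2}\}$, we have $|\Psi| = O(1/\delta^3)$, a constant. Therefore the total number of certificate vectors is at most the per-class count raised to the power $|\Psi|$, namely
\[
\left( O\!\left( \frac{\log n}{\eps} \right) \right)^{O(1/\delta^3)} = O\!\left( \left( \frac{\log n}{\eps} \right)^{O(1/\delta^3)} \right),
\]
where the constants hidden in the exponent's $O(1/\delta^3)$ and in the base depend only on $\delta$, hence only on the fixed constant $\Delta$. This matches the claimed bound. I would also note that the constraint $|\h_\psi| \ge u_\psi + v_\psi$ from property~\ref{nice4} and the lower-bound constraints from property~\ref{nice5} only shrink the feasible set, so they can safely be ignored for an upper bound on the count.

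The only mildly delicate point — and the one I would treat as the main obstacle, though it is routine — is handling the small end of the geometric grid: the values $\lfloor (1+\eps\delta^3)^t \rfloor$ for $t$ near $0$ or negative all collapse to $0$ or $1$, so I must be careful that the "number of distinct values at most $n$" count is $O\!\left(\frac{\log n}{\eps\delta^3}\right)$ rather than accidentally overcounting a range of exponents that produce no new integers. This is handled by simply noting the distinct attainable values form an increasing sequence $0, 1, \ldots$ eventually growing geometrically, so the count of those below $n$ is $O(1) + O\!\left(\frac{\log n}{\eps\delta^3}\right)$. Everything else is arithmetic: bounding $\ln(1+x) \ge x/2$ for $x \in (0,1)$, and collecting the $\delta$-dependent constants into the exponent. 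Since the number of iterations of the enumeration equals the number of candidate certificate vectors, and each iteration is later shown to run in polynomial time, the overall scheme remains polynomial for fixed $\Delta$.
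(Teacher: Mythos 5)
Your proposal is correct and follows essentially the same route as the paper: bound the number of admissible values per component of the certificate vector by $O(\log n/\eps)$ (using $\delta$ constant), and raise to the power $O(1/\delta^3)$, the number of components. The extra care you take with the small-$t$ end of the geometric grid and the explicit $\ln(1+x)\ge x/2$ bound is a minor elaboration of the paper's one-line count of "$2+\log_{1+\eps\delta^3}n$" per component.
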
  
\begin{proof}
We have that the number of components in the certificate vector is $O(\frac{1}{\delta^3})$. Each such component is a non-negative integer that is at most $\max_{\psi} |\h_{\psi}| \leq n$ and it is a rounded value of an integer power of $1+\eps\cdot \delta^3$.  Therefore, the number of possible values for each component is at most $2+\log_{1+\eps\cdot \delta^3} n = O(\frac{\log n}{\eps}) $ using $\frac{\Delta}{8} < \delta<\Delta\leq 1$ and by the fact that $\Delta$ is a constant.  Therefore, the number of possible certificate vectors is $O\left( (\frac{\log n}{\eps})^{O(1/\delta^3)} \right)$.  
\qed\end{proof}

We define two disjoint subsets of items.  The first subset of items consists of items that may belong to bins of type 1 in $\optn$ and the second subset consists of items that may belong to bins of type 2 in $\optn$.  The first subset has for every $\psi\in \Psi$ the last $v_{\psi}$ items of the class (and this subset has no non-huge items).  The second subset consists of all non-huge items and the first $u_{\psi}$ items of every class $\psi$ of huge items.  We denote by $I_1$ and $I_2$ the two inputs to our problem where in $I_k$ the items are of the $k$-th subset.  By the fourth property in the definition of nice solutions we conclude that the two subsets are disjoint.  Our algorithm will find a feasible solution $\apx_1$ for $I_1$ and a feasible solution $\apx_2$ for $I_2$.  Then, the output of our scheme is the union of the bin sets of these two solutions and its objective function value is the sum of the objective function value of $\apx_1$ and the objective function value of $\apx_2$.

For the sake of our analysis, for $I_1$ we consider a solution that maximizes the number of bins with items of total size in $[1+\delta,1+\Delta)$ subject to the constraint that for every $\psi\in \Psi$ at least $\frac{v_{\psi}}{2}$ items of class $\psi$ are packed in such bins. We refer to both the solution and the number of those bins in this solution by $\opt_1$.  Similarly, for $I_2$ we consider a solution that maximizes the number of bins with items of total size in $[1,1+3\delta)$ subject to the constraint that for every $\psi\in \Psi$ the number of items of class $\psi$ that are packed in such bins is at least $\frac{u_{\psi}}{2}$.  We refer to both the solution and the number of those bins in this solution by $\opt_2$.  Then, we have the following observation by the fact that $\optn$ is an optimal nice solution.
\begin{observation}
The number of near exact covered bins in $\optn$ equals $\opt_1+\opt_2$.
\end{observation}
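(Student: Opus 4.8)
Denote by $n_1$ and $n_2$ the numbers of type $1$ and type $2$ near exact covered bins of $\optn$, so the objective value of $\optn$ equals $n_1+n_2$. The plan is to prove the two inequalities $n_1+n_2\le\opt_1+\opt_2$ and $\opt_1+\opt_2\le n_1+n_2$ separately and then combine them.

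For the first inequality I would restrict $\optn$ to each of the two item sets. Consider the type $1$ bins of $\optn$, which by properties \ref{nice2} and \ref{nice3} contain only huge items lying among the last $v_\psi$ items of their class, i.e.\ only items of $I_1$. Adding a dedicated bin for each item of $I_1$ not packed in one of these bins, we obtain a partition of $I_1$ in which the $n_1$ original bins have total size in $[1+\delta,1+\Delta)$ (property \ref{nice1}) and, for every $\psi\in\Psi$, contain at least $\frac{v_\psi}{2}$ items of class $\psi$ (property \ref{nice5}); hence this partition is feasible for the problem defining $\opt_1$ and shows $\opt_1\ge n_1$. Symmetrically, the type $2$ bins of $\optn$ use only items of $I_2$ (non-huge items, and by property \ref{nice3} at most the first $u_\psi$ items of each huge class), have total size at least $1$ and at most $1+3\delta$ by property \ref{nice1} (so within $[1,1+\Delta)$, since $3\delta\le 3\Delta/4<\Delta$), and contain at least $\frac{u_\psi}{2}$ items of each class by property \ref{nice5}; adding dedicated bins for the remaining items of $I_2$ gives a partition feasible for the problem defining $\opt_2$, so $\opt_2\ge n_2$. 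Adding the two bounds yields $n_1+n_2\le\opt_1+\opt_2$.

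For the reverse inequality I would glue the two auxiliary optimal solutions together. Form a solution for $\I$ by taking the union of all bins of the solution $\opt_1$ for $I_1$ with all bins of the solution $\opt_2$ for $I_2$, plus one dedicated bin for each item of $\I\setminus(I_1\cup I_2)$ --- these are exactly the huge items that are neither among the first $u_\psi$ nor among the last $v_\psi$ items of their class. Since $I_1$ and $I_2$ are disjoint (property \ref{nice4}), this is a valid partition of $\I$, and I claim it is a nice solution with the certificate vector $(v_\psi,u_\psi)_{\psi\in\Psi}$ of $\optn$, calling the bins inherited from the solution $\opt_1$ type $1$ and those inherited from $\opt_2$ type $2$. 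Indeed: property \ref{nice1} holds by the size windows in the definitions of $\opt_1$ and $\opt_2$; property \ref{nice2} holds because $I_1$ has no non-huge item; property \ref{nice3} holds because $I_1$ (resp.\ $I_2$) contains only the last $v_\psi$ (resp.\ first $u_\psi$) items of each class; property \ref{nice4} is inherited verbatim from $\optn$; and property \ref{nice5} is exactly the cardinality constraint imposed in the definitions of $\opt_1$ and $\opt_2$. This nice solution has $\opt_1+\opt_2$ near exact covered bins, so by the maximality of $\optn$ among all nice solutions $\opt_1+\opt_2\le n_1+n_2$; together with the first inequality this proves the observation.

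The argument has no real obstacle. The only part requiring attention is the bookkeeping in the reverse direction --- verifying all five defining conditions of a nice solution for the glued packing (each being immediate from the way $I_1$, $I_2$, $\opt_1$ and $\opt_2$ are defined) and invoking the disjointness of $I_1$ and $I_2$ --- together with the mild point, used in the first direction, that the size window permitted for type $2$ bins of a nice solution matches the window $[1,1+3\delta)$ from the definition of $\opt_2$ (for type $1$ bins both windows are literally $[1+\delta,1+\Delta)$).
\qed
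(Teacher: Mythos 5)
The restriction argument for $\optn\le\opt_1+\opt_2$ is sound: type $1$ (resp.\ type $2$) near exact covered bins of $\optn$ use only items of $I_1$ (resp.\ $I_2$) by properties \ref{nice2} and \ref{nice3}, lie in the size window required for $\opt_1$ (resp.\ $\opt_2$) by property \ref{nice1}, and meet the cardinality requirement by property \ref{nice5}; padding with dedicated bins gives feasible solutions for the two auxiliary problems with $n_1$ resp.\ $n_2$ counted bins. This is also the only direction the paper actually uses (it is combined with $\apx_k\ge(1-\kappa\eps)\opt_k-C_k$ and $\optn\ge(1-2\eps)\opt$), and the paper states the observation without proof, so there is nothing to compare this half against.

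Your gluing argument for $\opt_1+\opt_2\le n_1+n_2$ has a gap in the verification of a nice solution, specifically in your claim that property \ref{nice1} ``holds by the size windows in the definitions of $\opt_1$ and $\opt_2$.'' Those definitions only constrain the \emph{good} bins of $\opt_1$ and $\opt_2$; the glued solution also contains the non-good bins of $\opt_1$ (which may have total size in $[1,1+\delta)$), the non-good bins of $\opt_2$ (which may have total size in $[1+3\delta,1+\Delta)$), and the dedicated bins for $\I\setminus(I_1\cup I_2)$ (which contain an item possibly of size exactly $1$). Any such bin is near exact covered, and the definition of a nice solution requires every near exact covered bin to be placed in the type $1$/type $2$ partition subject to properties \ref{nice1}--\ref{nice3}, which none of these stray bins satisfies: a leftover $\opt_1$-bin of size in $[1,1+\delta)$ is too small for type $1$ and uses items from the wrong end of each class for type $3$; a leftover $\opt_2$-bin of size in $[1+3\delta,1+\Delta)$ is too large for type $2$ and may contain non-huge items, disqualifying it as type $1$; and a singleton of size $1$ from $\I\setminus(I_1\cup I_2)$ fits neither. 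You therefore need an extra step arguing that the leftovers of $\opt_1$, of $\opt_2$, and of $\I\setminus(I_1\cup I_2)$ can be repacked so that no stray near exact covered bin remains before you may invoke the optimality of $\optn$ among nice solutions.
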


Thus, by definition the following holds.  Assume that we can find such $\apx_1,\apx_2$ in polynomial time, such that $(1-\kappa\eps)\opt_1+C_1\leq \apx_1$ and $(1-\kappa\eps)\opt_2+C_2\leq \apx_2$ for a given constant  $\kappa$ that is not a function of $\eps$ or $\delta$, and for given constants $C_1$ and $C_2$ (that may depend on $\eps,\delta$). Then we obtain an AFPTAS for $\pr$.
This scheme has an asymptotic approximation ratio of $\frac{1}{(1-\kappa\eps)(1-2\eps)}$.

\section{Approximating $I_1$\label{I1sec}}
Here, we apply a method similar to the AFPTAS's for the bin packing problem \cite{FerLue81,KarKar82}.  These methods are based on linear grouping of the items, formulating a set of so-called {\em bin configurations}, solving a linear program for deciding how many bins are packed with each configuration, and then round up the resulting fractional solution.  Most applications of this procedure invoke an approximated separation oracle to the dual linear program and use this approximated separation oracle to approximately solving the (primal) linear program.  Here we simplify this step using the assumption that every item has size at least $\delta \geq \frac{\Delta}{8}$ where $\Delta>0$ is a fixed constant.  Using this simplification, we solve exactly the primal linear program without using this mechanism of \cite{KarKar82}.  When we consider $I_1$ with respect to $\opt_1$ we say that a bin is a {\em good bin} if the total size of its items is in $[1+\delta,1+\Delta)$.

\subsection{Linear grouping of each class of huge items}  

For every $\psi\in \Psi$, we apply linear grouping of $\h_{\psi}$ into $\frac{1}{\eps^2 \delta}$ subsets.  That is, for every $\psi$, if there are strictly less than $\frac{1}{\eps^2 \delta}$ items of class $\psi$ in the input $I_1$, then every item of the class has its own set in the collection of sets $\h_{\psi}^{\alpha}$ for $\alpha=1,2,\ldots ,\frac{1}{\eps^2 \delta}$, and no rounding is applied so the rounded up size of an item in the class equals its size.  Furthermore, we assume that $\h_{\psi}^1=\emptyset$, and for $\alpha\geq 2$, the set $\h_{\psi}^{\alpha}$ contains the $\alpha-1$-th largest index item of the class, and if there is no such item, then the last set is the empty set.  

If for the given value of $\psi$ we have at least  $\frac{1}{\eps^2 \delta}$ items of class $\psi$ in the input $I_1$, then we require the following.  First,  $|\h_{\psi}^1| \geq |\h_{\psi}^2| \geq \cdots \geq |\h_{\psi}^{1/(\eps^2\delta)}| \geq |\h_{\psi}^1|-1$. Second, the indexes of the items in $ \h_{\psi}^1$ are the largest ones in the class, the indexes of the items in $ \h_{\psi}^2$ are the next largest ones in the class, and so on.  In this case, we apply rounding and we let the rounded up size of an item of the class to be the largest size of an item in its subset.  

Observe that for every $\psi$ we have $|\h_{\psi}^1| \leq 2\eps^2\delta \cdot |\h_{\psi}|$.  We denote by $s'_i$ the rounded up size of item $i$.  Let $\opt'_1$ be a solution for the input items in $I_1$ where the size of each item is its rounded up size such that $\opt'_1$ maximizes the number of good bins. The use of the linear grouping to approximate $I_1$ is motivated by the following lemma.  In what follows we will approximate $\opt'_1$.

\begin{lemma}\label{lingroup1}
We have $(1-\eps) \cdot \opt_1 \leq \opt'_1$.  Furthermore, let $\sol$ be a feasible solution to the rounded instance whose number of good bins is also denoted as $\sol$.  Then, packing each item of the original (non-rounded) instance exactly as in $\sol$, results in a solution whose number of near exact covered bins is at least $\sol$. 
\end{lemma}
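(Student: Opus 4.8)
I would handle the two assertions separately; the second is by far the easier.

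For the second assertion, note that all sizes in a fixed class $\h_\psi$ lie in an interval of length $\delta^3$, so $s'_i-s_i\le\delta^3$ for every item $i$. A good bin of the rounded instance has total below $1+\Delta\le 2$ and uses only items of size at least $\delta$, hence at most $2/\delta$ of them, so reading that bin with the original sizes lowers its total by at most $\frac2\delta\cdot\delta^3=2\delta^2<\delta$ and keeps it in $[1+\delta-2\delta^2,1+\Delta)\subseteq[1,1+\Delta)$. Since the item partition is unchanged, each of the $\sol$ good bins becomes a near exact covered bin, which is the claim.

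For the first assertion I would run the classical linear-grouping argument, taking care near the two thresholds $1+\delta$ and $1+\Delta$. Fix an optimal solution for $\opt_1$. The preliminary estimate that makes everything go through is $|I_1|\le\frac4\delta\opt_1$: Property~\ref{nice5} of a nice solution forces at least half of the $I_1$-items of each class into good bins of this solution, so the good bins jointly contain at least $|I_1|/2$ items while each of them holds at most $2/\delta$. Hence the first linear-grouping subset of each class, of size at most $2\eps^2\delta\,|\h_\psi\cap I_1|$, contains in total at most $2\eps^2\delta\,|I_1|\le 8\eps^2\opt_1\le\eps\opt_1$ items (using $\eps\le\frac1{12}$). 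Discarding every good bin that meets one such item costs at most $\eps\opt_1$ bins and leaves $\ge(1-\eps)\opt_1$ good bins whose items all avoid the extreme subset of their class. I would then transport this packing to the rounded instance: in each class the linear grouping provides an injection from the rounded copies lying outside the extreme subset to no-smaller original items of the adjacent subset, with all sizes staying inside the length-$\delta^3$ window of the class, so substituting a suitable rounded copy for each surviving item changes every bin total by at most $\frac2\delta\cdot\delta^3=2\delta^2$. Since $\delta\le\Delta/4$ gives $2\delta^2<\Delta-\delta$, there is room to decide, bin by bin, whether to round a bin's items up in place or to shift them to the adjacent subset, so that every surviving bin ends up with total in $[1+\delta,1+\Delta)$; as the rounded copies are assigned injectively, the outcome is a feasible packing of the rounded instance with $\ge(1-\eps)\opt_1$ good bins, i.e.\ $(1-\eps)\opt_1\le\opt'_1$.

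The step I expect to be delicate is this transport: the finitely many rounded sizes of each class must be parcelled out to the surviving bins \emph{simultaneously}, respecting each rounded size's multiplicity while nudging every bin total toward the window $[1+\delta,1+\Delta)$ — bins whose original total sits near $1+\delta$ need marginally larger copies, bins near $1+\Delta$ need marginally smaller ones, and one must verify that these competing demands are jointly satisfiable (at the possible cost of a constant number of further discarded bins, which is harmless for an asymptotic bound). What makes it work is the slack $\Delta-\delta\gg2\delta^2$ coming from $\delta\le\Delta/4$ together with the bound $|I_1|=O(\opt_1/\delta)$ from Property~\ref{nice5}; the rest is routine estimation.
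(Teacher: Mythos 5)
Your proof of the second assertion is correct and essentially identical to the paper's: a good bin of the rounded instance has at most $2/\delta$ items, each rounded up by at most $\delta^3$, so its original total lies in $[1+\delta-2\delta^2,\,1+\Delta)\subseteq[1,1+\Delta)$.

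For the first assertion you have the right preliminary estimate ($\sum_\psi v_\psi = O(\opt_1/\delta)$ via Property~\ref{nice5}, hence the extreme blocks touch at most $\eps\opt_1$ bins), but the ``transport'' step — which you yourself flag as delicate — is left as a genuine gap, and I do not think it is ``routine estimation.'' The difficulty is exactly the one you name: a bin with original total near $1+\delta$ must round its items up in place, while a bin with total near $1+\Delta$ must replace its items by rounded copies taken from the neighbouring (smaller) block; but these are two different injections on the same class of items, and making per-bin choices between them need not produce a globally injective assignment of rounded copies (a constant fraction of bins could demand each regime, and they could contend for the same items, so discarding ``a constant number'' of bins does not obviously repair it). The paper sidesteps the conflict by a construction you do not have: it partitions the good bins at the intermediate threshold $1+3\delta$. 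Bins with total at most $1+3\delta$ simply keep their own items and round them up in place (safe, since $3\delta+2\delta^2<\Delta$). Bins with total in $(1+3\delta,1+\Delta)$ are declared ``free,'' their items are re-indexed to occupy the first positions in each class, and a single uniform shift of $\lceil 2\eps^2\delta v_\psi\rceil$ within each class is applied to the free items only. Because free items occupy a prefix of the re-indexed order, the shift permutes free items among free bins (or sends the first few to dedicated bins); the non-free bins are never touched, so the two regimes act on disjoint item sets and injectivity is automatic. The loss is then bounded by the number of free bins whose replacement item falls off the end of the class, which is $\le\sum_\psi 3\eps^2\delta v_\psi\le\eps\opt_1$. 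So your outline would need this partition (or an equivalent device) to close the argument; as written, the claim that ``the rounded copies are assigned injectively'' is asserted, not proved.
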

\begin{proof}
First, consider the second part of this lemma.  This claim holds as every good bin $\B$ of $\sol$ (in the rounded instance) has at most $\frac{2}{\delta}$ (huge) items of $I_1$. Thus, when we compute the difference in the total size of items in $\B$ with respect to the rounded up size and with respect to the size, then this difference is non-negative and at most $\frac{2}{\delta}\cdot \delta^3$.  Therefore, if $\B$ was one of the good bins in $\sol$ of total (rounded up) size in $[1+\delta,1+\Delta)$, then the total size of its items is in $[1+\delta - 2\delta^2,1+\Delta)$.  Thus, in this case $\B$ is a near exact covered bin in this solution of the original instance.  This suffices for this claim.

It remains to prove the inequality $(1-\eps) \cdot \opt_1 \leq \opt'_1$.  We consider the solution $\opt_1$ and construct a feasible solution for the rounded instance based on $\opt_1$.  For a bin $\B$ whose total (original) size of items  is either at most $1+3\delta$ or at least $1+\Delta$, we leave the items in $\B$ as they were in $\opt_1$.  Next, we claim that the number of those bins with total size in $[1+\delta,1+\Delta)$ is at least their number in $\opt_1$.  This is so because by rounding up the size of each item we can only increase the total size of items, and since we round the size of an item to a size of another item in the same class, such an increase is not larger than $\delta^3$.  Using the fact that $\B$ has at most $2/\delta$ items this last claim holds.  Consider the other bins, and we refer to such bin as {\em free bin} whose items are called {\em free items}.

Next, for the purpose of this proof we re-index the items of every class so that the free items appear first (sorted by the original index of the items in the class) and only afterwards the non-free items.  We repack the free bins as follows.  For a class $\psi$ that has less than $\frac{1}{\eps^2 \delta}$ items we do not change the assignment of the items of the class (there is no rounding of such class) and we say that every item of the class is replaced by itself.  For other classes, we apply the following process.
For a free item of class $\h_{\psi}$ whose index (in the class) is $i$, we pack it in the bin of $\opt_1$  that used to have the item of index $i-\lceil 2\eps^2\delta v_{\psi}\rceil$ in this class. We say that item $i-\lceil 2\eps^2\delta v_{\psi}\rceil$ is replaced by $i$. Note that the rounded size of $i-\lceil 2\eps^2\delta v_{\psi}\rceil$ is at most the (original) size of $i$.
If $i-\lceil 2\eps^2\delta v_{\psi}\rceil\leq 0$ (meaning there is no such item), then we pack $i$ in a new dedicated bin.   We have that if a bin $\B$ in $\opt_1$ has only items that are replaced by items, then the total rounded up size of the replacing items of $\B$ (in the new solution) is at most the total original size of items in $\B$ and this is at most $1+\Delta$.  On the other hand, every such replaced item has a size that is at most $\delta^3$ smaller than the item it replaces. Since $\B$ has less than $\frac 2{\delta}$ items, the total (rounded) size of the items of the bin after this replacement is not smaller than $1+3\delta-\frac{2}{\delta}\cdot \delta^3 >1+\delta$.

It remains to show that the number of free bins containing items that are not replaced by other items is at most $\eps\opt_1$.  First note that every good bin of $\opt_1$ has at most $\frac{2}{\delta}$ items out of at least $\sum_{\psi\in \Psi} \frac{v_{\psi}}{2}$ items that participate in such bins.  Therefore, $$\opt_1\geq \frac{\delta}{4} \sum_{\psi\in \Psi} v_{\psi} \ .$$  Second, for a class $\psi$, the number of items that are not replaced is at most $\lceil 2\eps^2\delta v_{\psi}\rceil\leq 2\eps^2\delta v_{\psi}+1$. But such items exist only if $\eps^2\delta v_{\psi} \geq 1$ so we have at most $3\eps^2\delta v_{\psi}$ such items of class $\psi$.
Thus, by summing over all classes, we have at most $\sum_{\psi\in \Psi} 3\eps^2\delta v_{\psi}$ such items. This is also a valid upper bound on the number of free bins containing an item that is not replaced.  Finally, we have $$\sum_{\psi\in \Psi} 3\eps^2\delta v_{\psi}= 3\eps^2\delta \cdot \sum_{\psi\in \Psi} v_{\psi} \leq 12\eps^2 \opt_1\leq \eps\opt_1$$ where the last inequality holds using $\eps\leq \frac{1}{12}$.
\qed\end{proof}

\subsection{The configuration LP}  

We define a bin configuration to describe a packing of one bin.  These bin configurations are used next to formulate the so-called configuration integer program that directs our algorithm.  We would like to approximate $\opt'_1$ and consider the instance with item set $I_1$ where the size of every item $i$ is the rounded up size of the item $s'_i$, i.e., the {\em rounded-up instance}.  Our goal is to maximize the number of good bins. 

Formally, a {\em bin configuration} is a multi-set of sizes of items in the rounded up instance where the total size of items in this multi-set is strictly less than $1+\Delta$.  Since the size of every item in this instance is at least $\delta$, each multi-set of items described by a configuration has at most $\frac{2}{\delta}$ items.  Furthermore, we say that a bin configuration has a unit reward if the total size of its items is in $[1+\delta,1+\Delta)$ and otherwise it has zero reward.  Next, we prove that there is polynomially many bin configurations (for a fixed value of $\delta$).

\begin{lemma}
There are at most $(\frac{1}{\eps^2\delta^4}+1)^{2/\delta}$ bin configurations.
\end{lemma}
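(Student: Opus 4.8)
The plan is to bound the number of bin configurations by first bounding the number of \emph{distinct} item sizes that can appear in the rounded-up instance, and then bounding the number of multi-sets of bounded cardinality over this set of sizes.

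First I would count the distinct rounded-up sizes. Every such size arises from the linear grouping applied to some class $\h_\psi$: within each class we form $\frac{1}{\eps^2\delta}$ subsets, and all items in a subset receive a common rounded-up size (the largest size in the subset), so each class contributes at most $\frac{1}{\eps^2\delta}$ distinct rounded-up sizes. In the degenerate case where a class has fewer than $\frac{1}{\eps^2\delta}$ items no rounding is applied, but then the class has fewer than $\frac{1}{\eps^2\delta}$ items to begin with, so the bound of $\frac{1}{\eps^2\delta}$ distinct sizes per class still holds. Since the index set $\Psi$ of classes has $|\Psi| = \frac{1}{\delta^3}-\frac{1}{\delta^2}+1 \leq \frac{1}{\delta^3}$ elements (using $\delta\leq 1$), the total number of distinct sizes in the rounded-up instance is at most $\frac{1}{\delta^3}\cdot\frac{1}{\eps^2\delta} = \frac{1}{\eps^2\delta^4}$.

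Next I would use the structural fact already noted: since every item of $I_1$ has size at least $\delta$ and a configuration has total size strictly less than $1+\Delta\leq 2$, each configuration consists of at most $\frac{2}{\delta}$ items. Hence every configuration can be encoded as an assignment of $\frac{2}{\delta}$ ``slots'' to values, where each slot takes either one of the at most $\frac{1}{\eps^2\delta^4}$ distinct rounded-up sizes or a special value ``empty''; reading such an encoding as an ordered tuple (which only overcounts, since a configuration is an unordered multi-set) shows that the number of configurations is at most $\left(\frac{1}{\eps^2\delta^4}+1\right)^{2/\delta}$, as claimed.

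\textbf{Main obstacle.} There is no genuinely hard step here; the only care needed is to read off $|\Psi|\leq 1/\delta^3$ correctly from the definition of the classes and to handle the degenerate small-class case of linear grouping as above, after which the conclusion is the elementary count of length-$(2/\delta)$ strings over an alphabet of size $\frac{1}{\eps^2\delta^4}+1$.
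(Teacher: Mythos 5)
Your proof is correct and follows essentially the same approach as the paper: bound the number of distinct rounded-up sizes by $\frac{1}{\eps^2\delta^4}$, observe that a configuration has at most $\frac{2}{\delta}$ items (since each has size at least $\delta$ and the total is below $1+\Delta\leq 2$), and count ordered tuples of length $\frac{2}{\delta}$ over an alphabet of size $\frac{1}{\eps^2\delta^4}+1$. The only difference is that you explicitly derive the size-count bound from $|\Psi|\leq 1/\delta^3$ and the $\frac{1}{\eps^2\delta}$ groups per class, whereas the paper simply asserts it; your added detail is accurate.
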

\begin{proof}
In the rounded-up instance the number of different sizes is at most $\frac{1}{\eps^2\delta^4}$.  We consider the items in a bin configuration as a sequence containing $\frac{2}{\delta}$ positions.  In such sequence the $i$-th position is the index (in the list of sizes) of the size of the $i$-th item in the sequence where if no such item exists, then the corresponding position is $0$.  In total there are exactly $\frac{2}{\delta}$ positions and each of which is described as a non-negative integer that is at most  $\frac{1}{\eps^2\delta^4}$.  So the claim holds.
\qed\end{proof}

Observe that this last bound on the number of bin configurations is polynomial (of a constant degree) in $\frac{1}{\eps}$ and independent of the input encoding length.  So in order to design an AFPTAS we can have a step whose time complexity is polynomial in the number of bin configurations.  Next, we formulate the configuration integer program. 

For this formulation we treat a bin configuration as a vector of non-negative integers where the $i$-th component of the vector is the number of items of the $i$-th size in the multi-set described by the configuration.  We let $\C$ denote the set of all bin configurations.
 The decision variable $x_c$ for a configuration $c\in \C$ is the number of bins with configuration $c$.  For the $i$-th size in the input we denote by $c_i$ the $i$-th component of configuration $c\in \C$ and by $\nu_i$ the number of items in the rounded up instance of this size.  We denote the subset of $\C$ consisting of all configurations with unit reward by $\C_1$.  We denote by $\sigma$ the index set of sizes of items in the rounded up instance. The {\em configuration integer program} is the following integer program.
 
 \begin{eqnarray*}
 \max& \sum_{c\in \C_1} x_c & \\
 s.t. & \sum_{c\in \C} c_i \cdot x_c = \nu_i &\forall i \in \sigma \\
 & x_c \geq 0& \forall c\in \C \ .
 \end{eqnarray*}
 The number of decision variables is the number of bin configurations that is at most $(\frac{1}{\eps^2\delta^4}+1)^{2/\delta}$. The number of constraints (excluding the non-negativity constraints) is  $|\sigma|\leq \frac{1}{\eps^2\delta^4}$.  The {\em configuration LP} is the linear programming relaxation obtained from the above integer program by allowing the variables to be non-integers.
 
Our algorithm formulates the configuration LP and solves it (optimally) using the ellipsoid algorithm or another polynomial time algorithm for solving linear programs.  This step runs in polynomial time as the dimension (i.e., the number of decision variables) and the number of constraints are upper bounded by a polynomial in $\frac{1}{\eps}$.  The maximum encoding length of a number that appears in  this linear program is $O(\log n)$. Therefore, the polynomial time algorithms for solving a linear program runs in time that is upper bounded by a polynomial in $\frac{1}{\eps}$ times a polylog in the number of items $n$.  Furthermore, we can assume that this algorithm outputs a basic optimal solution (by applying a basis-crashing algorithm like \cite{beling98}). We denote an optimal basic solution of this linear program by $x^*$.  We have that $\opt'_1 \leq \sum_{c\in \C_1} x^*_c$ as we show next using the fact that $\opt'_1$ defines an integer feasible solution to the configuration LP.

\begin{lemma}\label{lemopt'}
We have $\opt'_1 \leq \sum_{c\in \C_1} x^*_c$.
\end{lemma}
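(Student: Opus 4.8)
The plan is to exhibit an \emph{integral} feasible solution of the configuration LP whose objective value equals $\opt'_1$; since $x^*$ is an optimal solution of the (maximization) linear relaxation, the bound $\opt'_1 \le \sum_{c\in\C_1} x^*_c$ follows at once.

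First I would take the solution $\opt'_1$, viewed as a partition of the items of $I_1$ (each carrying its rounded-up size $s'_i$) into bins. The only reason this partition does not immediately yield an integral point of the configuration LP is that some bin might have total rounded-up size at least $1+\Delta$, which is not a legal configuration. I would remove this obstruction by a cleanup step: for every bin whose total rounded-up size is at least $1+\Delta$, replace it by singleton bins, one per item. Each singleton bin has total size at most $1<1+\Delta$, hence is a legal configuration, and in fact lies in $\C\setminus\C_1$ since $1<1+\delta$; moreover a bin of total size at least $1+\Delta$ was never a good bin, so this step changes neither the set of good bins nor the property of being a partition of all the items of $I_1$. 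Let $\sol$ denote the resulting solution; its number of good bins is still exactly $\opt'_1$.

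Now every bin of $\sol$ realizes some bin configuration $c\in\C$, namely the multiset of rounded-up sizes of the items placed in it. I would set $x_c$ to be the number of bins of $\sol$ realizing configuration $c$. Because $\sol$ partitions the items of $I_1$, for each size index $i\in\sigma$ the total number of items of that size used over all configurations equals $\nu_i$, i.e. $\sum_{c\in\C} c_i x_c = \nu_i$, and each $x_c$ is a non-negative integer; hence $(x_c)_{c\in\C}$ is a feasible integral solution of the configuration LP. Its objective value $\sum_{c\in\C_1} x_c$ counts precisely the bins of $\sol$ whose total rounded-up size lies in $[1+\delta,1+\Delta)$, that is, the good bins, and so equals $\opt'_1$.

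Finally, since the configuration LP is a maximization problem and $x^*$ is an optimal solution of it, $\sum_{c\in\C_1} x^*_c \ge \sum_{c\in\C_1} x_c = \opt'_1$, which is the claim. The only point that requires any care is the cleanup of over-full bins; the rest is a routine translation between partitions into bins and integral points of the configuration polytope.
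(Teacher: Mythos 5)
Your proof is correct and follows essentially the same route as the paper: view the bins of $\opt'_1$ as configurations, let $x^o_c$ count bins with configuration $c$, observe that this is a feasible integral point of the configuration LP with objective value $\opt'_1$, and conclude by optimality of $x^*$. The only difference is that you spell out the justification for the paper's ``without loss of generality every bin has total size less than $1+\Delta$'' step by explicitly splitting over-full (necessarily non-good) bins into singletons; this is exactly the intended reasoning and does not change the argument.
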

\begin{proof}
Based on $\opt'_1$ we define a bin configuration for every bin of this solution, that is, the multi-set of sizes packed into this bin.  Observe that without loss of generality every bin in this solution has items of total size less than $1+\Delta$, so indeed there is a configuration in $\C$ with this multi-set of sizes.  Then, we set the integer feasible solution $x^o$ as follows.  For every $c\in \C$, the value of $x^o_c$ is the number of bins in the solution $\opt'_1$ with configuration $c$.  Since every item is packed into exactly one bin, the constraint $\sum_{c\in \C} c_i \cdot x_c = \nu_i$ is satisfied for all $i\in \sigma$. Thus, indeed $x^o$ is a feasible integer solution for the configuration LP.  Its objective function value as a solution to this linear program is exactly $\opt'_1$. The claim holds as $x^*$ is an optimal solution for the linear program.
\qed\end{proof}
 
We let $x'_c=\lfloor x^*_c \rfloor$ for all $c\in \C$. We pack a subset of items based on this integer vector $x'$ using the following process.  For every $c\in \C$ we have $x'_c$ bins packed according to $c$.  For each such bin $\B$ packed according to $c$ and every size $i\in \sigma$, we pick $c_i$ items of the $i$-th size and pack them into $\B$.  These picked items are not picked to other bins.  We have enough items of each size as $c$ is non-negative for all $c\in \C$ and $x'_c\leq x^*_c$ so $$\sum_{c\in \C} c_i \cdot x'_c \leq  \sum_{c\in \C} c_i \cdot x^*_c =\nu_i$$ holds for all $i\in \sigma$.  Additional items that were not packed by this process are packed into dedicated bins and do not contribute to the objective function value of $x'$.  In this way we output a feasible solution $\apx_1$ satisfying the required performance guarantee as we establish next.
\begin{lemma}
We have $$\apx_1 \geq (1-\eps) \cdot \opt_1  -  \frac{1}{\eps^2\delta^4}.$$
\end{lemma}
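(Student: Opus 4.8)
The plan is to chain together the three facts already available --- the linear-grouping inequality of Lemma~\ref{lingroup1}, the LP-relaxation bound of Lemma~\ref{lemopt'}, and a bound on the integrality loss incurred when we round the fractional optimum $x^*$ down to $x'$ --- and to observe that this last loss is at most $|\sigma|\le \frac{1}{\eps^2\delta^4}$ because $x^*$ is a \emph{basic} solution.

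First I would record the trivial lower bound $\apx_1 \ge \sum_{c\in\C_1} x'_c$. By construction, for every $c\in\C_1$ the packing process fills $x'_c$ bins whose items have \emph{rounded-up} sizes forming the multiset $c$, hence rounded-up total load in $[1+\delta,1+\Delta)$. Since such a bin contains at most $\frac2\delta$ huge items and each rounded-up size exceeds the true size by less than $\delta^3$, the true load of the bin lies in $[1+\delta-2\delta^2,\,1+\Delta)\subseteq[1,\,1+\Delta)$, so the bin is near exact covered; thus it contributes to $\apx_1$. (This is exactly the second assertion of Lemma~\ref{lingroup1}, which I would simply cite.)

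Next I would bound $\sum_{c\in\C_1}x^*_c-\sum_{c\in\C_1}x'_c$. Since $x'_c=\lfloor x^*_c\rfloor$, the difference $x^*_c-x'_c$ is zero whenever $x^*_c=0$ and lies in $[0,1)$ otherwise, so the total loss over $\C_1$ is at most the number of configurations $c$ with $x^*_c>0$. Here is the one step needing care: $x^*$ is a basic optimal solution of the configuration LP, whose only equality constraints are the $|\sigma|$ covering constraints $\sum_{c\in\C} c_i x_c=\nu_i$, and a basic feasible solution of a linear program has at most as many nonzero coordinates as it has equality constraints. Hence at most $|\sigma|\le\frac{1}{\eps^2\delta^4}$ coordinates of $x^*$ are positive, giving $\sum_{c\in\C_1}x'_c \ge \sum_{c\in\C_1}x^*_c-\frac{1}{\eps^2\delta^4}$.

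Finally I would combine everything: by Lemma~\ref{lemopt'} we have $\sum_{c\in\C_1}x^*_c\ge\opt'_1$, and by Lemma~\ref{lingroup1} we have $\opt'_1\ge(1-\eps)\opt_1$, so $$\apx_1 \ \ge\ \sum_{c\in\C_1}x'_c \ \ge\ \opt'_1-\frac{1}{\eps^2\delta^4}\ \ge\ (1-\eps)\opt_1-\frac{1}{\eps^2\delta^4},$$ which is the claim. The only nonroutine ingredient is the sparsity of a basic LP solution and the realization that it is precisely the number of rounded-up size classes $|\sigma|$ that controls the additive error; the rest is substitution into results proved earlier, so I expect no real obstacle.
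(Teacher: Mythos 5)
Your proposal is correct and follows exactly the paper's argument: chain Lemma~\ref{lingroup1}, Lemma~\ref{lemopt'}, and the sparsity of the basic optimal solution $x^*$ (at most $|\sigma|\le \frac{1}{\eps^2\delta^4}$ nonzero coordinates) to bound the floor-rounding loss. The only cosmetic difference is that you spell out explicitly why the $x'$-packing yields near exact covered bins, whereas the paper folds that into a citation of the second part of Lemma~\ref{lingroup1}.
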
 
\begin{proof} We have the following.
\begin{eqnarray*}
\apx_1 & =& \sum_{c\in \C_1} x'_c \\
& \geq & \sum_{c\in \C_1} x^*_c - \frac{1}{\eps^2\delta^4} \\
& \geq & \opt'_1 -  \frac{1}{\eps^2\delta^4} \\
& \geq & (1-\eps) \cdot \opt_1  -  \frac{1}{\eps^2\delta^4},
\end{eqnarray*}
 where the first inequality holds as $x^*$ is a basic optimal solution for the linear program, the second inequality by Lemma \ref{lemopt'}, and the last inequality by Lemma \ref{lingroup1}.
\qed\end{proof}

\section{Approximating $I_2$\label{I2sec}}
Here, we will say that a bin $\B$ is a {\em good bin} if the total size of items in $\B$ is in the interval $[1,1+3\delta)$. Recall that $\opt_2$ is a solution maximizing the number of good bins among all solutions that for every $\psi$ have at least $u_{\psi}/2$ items packed in good bins.  We have that without loss of generality every good bin in $\opt_2$ that has a non-huge item have items of total size in $[1,1+\delta)$, and there is at most one bin that is not a good bin containing non-huge items.  To see the first observation we can repack one non-huge item placed in a good bin with items of total size larger than $1+\delta$ into a dedicated bin, and repeat as long as this first property does not hold.  To verify the second observation, move all non-huge items packed into bins that are not good bins, into one bin.  This process cannot decrease the number of good bins and satisfy the second property.  If the first property stops to hold, we partition this one new bin into several ones, where at most one of these new bins is not a good bin.

\paragraph{Additional guessing step.} Our first step is to guess the value of $\beta$ defined as the number of good bins in $\opt_2$ containing non-huge items.  Since $\beta$ is a non-negative integer not larger than the number of non-huge items and in particular $\beta\leq n$, we can enumerate all possibilities for the value of $\beta$. For each such possibility, we construct a feasible solution for our problem instance (see below). Last we choose the best feasible solution (among all iterations of the exhaustive enumeration implementing this guessing step).  In what follows, $\beta$ is the value of the guessed information.

\paragraph{Another classification of items.} We let $X$ denote the set of non-huge items.  We sort the items in $X$ in a non-increasing order of size (breaking ties based on the index of the item).  We classify the items in $X$ as follows.  The first $\min\{ |X| , \lceil \frac{\beta+1}{\eps} \rceil \}$ items in the sorted list of $X$ are {\em large items}, and we let $L=\h_0$ be the set of large items.  As will be clear later on, we also let $L$ be the class $0$ of huge items (although they are not huge items).  The next $\beta$ items in the sorted list are {\em medium items} and their set is denoted as $M$.   Last, the remaining items are {\em small items} whose set $S$ is $S=X\setminus (L \cup M)$.  If $|X|-|L| < \beta$, then $M=X\setminus L$ and $S=\emptyset$.  Furthermore, we let $\Psi'=\Psi\cup \{ 0 \}$ be the set of classes of huge items including the class $0$ containing the large items, and we let $\h'=\h\cup L$.

\subsection{Linear grouping of each class in $\h'$ and excluding the items in $M$}
We apply a linear grouping step similar to \cite{CJK01,JS03}, that is, this time the rounded size will be rounded down value and not rounded up as we did when approximating $I_1$.
For every $\psi\in \Psi'$ we apply linear grouping of $\h_{\psi}$ into $\frac{1}{\eps^2 \delta}$ subsets.  That is, for every $\psi$, if there are strictly less than $\frac{1}{\eps^2 \delta}$ items of class $\psi$ in the input $I_2$, then every item of the class has its own set in the collection of sets $\h_{\psi}^{\alpha}$ for $\alpha=1,2,\ldots ,\frac{1}{\eps^2 \delta}$. In this case, no rounding is applied so the rounded down size of an item in the class equals its size and we assume that $\h_{\psi}^1=\emptyset$.  

If for the given value of $\psi\in \Psi'$ we have at least  $\frac{1}{\eps^2 \delta}$ items of class $\psi$ in the input $I_2$, then we will require that $|\h_{\psi}^1| \geq |\h_{\psi}^2| \geq \cdots \geq |\h_{\psi}^{1/(\eps^2\delta)}| \geq |\h_{\psi}^1|-1$ and that the indexes of the items in $ \h_{\psi}^1$ are the largest ones in the class, the indexes of the items in $ \h_{\psi}^2$ are the next largest ones in the class, and so on.  In this case, we apply rounding and we let the rounded down size of an item of the class to be the smallest size of an item in its subset.  Observe that for every $\psi$,  we have $|\h_{\psi}^1| \leq 2\eps^2\delta \cdot |\h_{\psi}|$.  

We denote by $s'_i$ the rounded down size of item $i$ and for item in $S\cup M$ we let the rounded down size of the item be its size.   For a subset of items $\Lambda$ we let $s(\Lambda)=\sum_{i\in \Lambda} s'_i$ be its total rounded size.  Furthermore, we allow the algorithm to pack temporarily the items in $S$ fractionally. That is, we treat the small items as  {\em fluid} or {\em sand} of total size $s(S)$, and every bin may pack an arbitrary sized sand as long as the total size of the packed sand is not larger than $s(S)$.  We define the {\em rounded down instance of the problem} as the input with item set $I_2\setminus M$, the size of every item is the rounded down size of the item, and sand of total size $s(S)$ that can be packed fractionally.

 Let $\opt'_2$ be a solution for the rounded down instance such that $\opt'_2$ maximizes the number of good bins subject to the constraint that the number of good bins with non-zero sand is at most $\beta$. 
The use of the linear grouping to approximate $I_2$ is justified by the following two lemmas. 

\begin{lemma}\label{lingroup2a}
There is a polynomial time algorithm accomplishing the following task.  The input consists of a solution $\sol'$ for the rounded-down instance whose number of good bins is $\sol'$ where we assume that the number of good bins with non-zero space for sand in $\sol'$ is at most $\beta$. The output is a solution to the original instance $I_2$ whose number of near exact covered bins is at least $\sol'$. 
\end{lemma}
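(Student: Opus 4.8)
The plan is to turn the given solution $\sol'$ for the rounded-down instance into a solution for the original instance $I_2$ while keeping its bin-to-item assignment almost unchanged: in every bin I replace each rounded-down size of a huge or large item by that item's true (larger) size, I replace the fractional sand lying in each bin by an integral amount of genuine small items of $S$ (spending one medium item of $M$ on each bin that carried sand), and I put every remaining item into its own dedicated bin, which cannot decrease the number of near exact covered bins. Since the rounding was performed \emph{downward}, un-rounding an item only increases the total size of its bin, so every good bin of $\sol'$, whose rounded total lies in $[1,1+3\delta)$, keeps total size at least $1$. Hence the whole task splits into (i) making sure no good bin is pushed to total size $1+\Delta$ or beyond, and (ii) realizing the sand integrally without abandoning any good bin.

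For (i): a good bin holds fewer than $\frac1\delta+3$ huge items, each of which grows by less than $\delta^3$ under un-rounding, so the huge part of a good bin grows by less than $2\delta^2$; since $\delta\le\Delta/4\le\frac14$ we have $3\delta+2\delta^2<\Delta$, so a good bin consisting only of huge items never leaves $[1,1+\Delta)$. A good bin may, however, carry many large items, and un-rounding them can push its total above $1+\Delta$; whenever this happens I repeatedly move one large item (of size $<\delta\le\Delta/4$) from the bin to a dedicated bin, stopping the first time the total drops below $1+\Delta$. Just before the last move the total was at least $1+\Delta$ and it dropped by less than $\delta$, so it now lies in $(1+\Delta-\delta,\,1+\Delta)\subseteq(1,1+\Delta)$ and the bin is still near exact covered; and a removable (non-huge) item is always present, because the huge items of a good bin cannot by themselves reach total $1+\Delta$.

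For (ii): I may assume $|M|=\beta$, since otherwise $S=\emptyset$ and there is no sand to realize. At most $\beta$ good bins of $\sol'$ carry sand, so I can dedicate a distinct medium item $m_\B$ to each such bin $\B$. After the un-rounding let $\rho_\B$ be the total true size of the huge and large items of $\B$; if $\rho_\B\ge1$ there is nothing more to do for $\B$ (apply step (i) should its total exceed $1+\Delta$), and otherwise I put $m_\B$ into $\B$ and then keep drawing small items, one at a time, from a single global list (never reusing an item), until the small items drawn for $\B$ sum to at least $d_\B-s_{m_\B}$, where $d_\B=1-\rho_\B$ is the deficit of $\B$; note $d_\B\le\sigma_\B$, the sand amount of $\B$ in $\sol'$, because un-rounding did not shrink any item and $\B$ was a good bin. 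Since every small item is no larger than every medium item, the last small item drawn overshoots the target by less than $s_{m_\B}$, so the small items placed in $\B$ sum to less than $d_\B$, and the new total of $\B$ lies in $[\rho_\B+d_\B,\ \rho_\B+d_\B+s_{m_\B})=[1,1+s_{m_\B})\subseteq[1,1+\delta)\subseteq[1,1+\Delta)$. Moreover the total amount of small items consumed is less than $\sum_\B d_\B\le\sum_\B\sigma_\B\le s(S)$ (the middle sum over the good sand-bins, which is at most all the sand packed by $\sol'$), so the global list never runs out. Every good bin of $\sol'$ therefore yields a near exact covered bin, the constructed solution has objective value at least $\sol'$, and every step is clearly polynomial. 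The delicate point, and the reason the $\beta$ medium items were excluded from the rounded-down instance in the first place, is exactly step (ii): a naive greedy realization of the sand would lose up to one small item's worth per sand-bin, i.e. $\Theta(\beta)$ bins in the worst case, whereas the medium buffer absorbs this loss.
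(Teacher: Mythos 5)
Your proof is correct and follows essentially the same route as the paper's: un-round the huge/large items (which only increases bin totals, so good bins stay at least $1$), peel off large items one at a time from any bin that overshoots $1+\Delta$ (possible since huge items and sand alone stay below $1+\Delta$), and realize the sand greedily with small items plus one medium item per sand-bin to absorb the integrality loss, with the same size-accounting ($d_\B\le\sigma_\B$, $\sum d_\B\le s(S)$, $|\zeta|\le\beta\le|M|$) showing feasibility. The only cosmetic difference is that you place the medium item before the small items whereas the paper adds small items first and then the medium item; both orderings give the same interval $[1,1+\delta)$, and your closing remark correctly pinpoints the role of the medium items as a buffer against the per-bin rounding loss in the sand realization.
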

\begin{proof}
First, we consider the packing of huge or large items in $\sol'$, that is of the original sized items and without modifying the allocated sand for each bin. We identify the set $\zeta$ of good bins in $\sol'$ containing non-zero space for sand (as a solution to the rounded down instance).  
Consider a bin $\B$ in the resulting solution, and we show that if $\B$ was a good bin in $\sol'$, then either $\B$ is a near exact covered bin in the resulting solution, or we can can repack some of the large items used to be packed into $\B$ (and move these repacked items into dedicated bins), so that the resulting bin of the items and sand left in $\B$ is near exact covered. Thus, the upper bound of $\beta$ on the number of good bins containing sand will continue to hold.

Since the rounding of items was rounding down and $\B$ was a good bin in the rounded down instance, the total (original) size of its items and sand,  whose size is not modified, is at least $1$.  If $\B$ has no large items, then it has at most $\frac{2}{\delta}$ items plus some additional sand, and the maximum difference between the original size of an item and its rounded-down size is at most $\delta^3$.  Since $\B$ was a good bin, its total rounded-down size is smaller than $1+3\delta$.  Using the fact that the total size is at most $1+3\delta+\frac{2}{\delta}\cdot \delta^3 < 1+\Delta$, the claim follows.
Next, consider the case where some items in $\B$ are large.  If the total size of the items in $B$ is less than $1+\Delta$, then we are done.  Otherwise, we start deleting from $\B$ the large items, one after the other.  The process stops either once there are no large items, or when the total size of the items and sand left in $\B$ is  less than $1+\Delta$.  By the proof of the case where the original bin had no large items, we conclude that when the process ends it must be the case that the total size of the items and sand left in $\B$ is less than $1+\Delta$.  

At this point, the set $\zeta$ contains all good bins with non-zero sand and by the assumption of the lemma, $\zeta$ has at most $\beta$ bins.  Our next goal is to replace the sand in the bins of $\zeta$ by the items in $S\cup M$, so that the number of near exact covered bins will not decrease.  First, note that we can decrease the amount of packed sand in some good bins to ensure that if a good bin has non-zero sand, then its total size of items and sand is exactly $1$.  With a slight abuse of notation, we let $\zeta$ be the resulting set of good bins with non-zero sand. This is a subset of the original set $\zeta$ so it has at most $\beta$ bins.

We process the bins in $\zeta$, one by one.  Consider the current bin $\B\in \zeta$.  We remove the sand from $\B$ and we start adding items from $S$ until the first time in which the added item is about to increase the total size of items in $\B$ to be at least $1$.  This last item from $S$ is not added to $\B$ and instead we add one item from $M$.  Then, we conclude that the resulting bin has size not smaller than $1$ and not larger than $1+\delta$ where the upper bound follows as medium items are not huge.  The items that we added to $\B$ are deleted from the corresponding sets ($S$ or $M$) and we continue to process the next bin from $\zeta$.  

Observe that the total size of small items that we pack to a good bin $\B$ is not larger than the size of sand packed into $\B$ in the solution $\sol'$. Therefore, by summing over all bins in $\zeta$, we have sufficiently many small items to pack into all bins in $\zeta$.  We have enough medium items as every bin in $\zeta$ gets one medium item and $|\zeta|\leq \beta \leq |M|$ where the second inequality holds in cases there is sand in the instance (and otherwise $\zeta=\emptyset$ and this step does not exist).  If the process leaves some unpacked small or medium items (after processing all bins in $\zeta$), then we pack each such item in its own dedicated bin without modifying the number of near exact covered bins.
\qed\end{proof}

Next, we consider the other direction showing that approximating the rounded down instance is sufficiently close to approximate the original instance. 

\begin{lemma}\label{lingroup2}
We have $(1-3\eps) \cdot \opt_2 -1 \leq \opt'_2$. 
\end{lemma}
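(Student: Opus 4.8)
The plan is to start from $\opt_2$ and transform it, class by class, into a feasible solution for the rounded-down instance that loses only a $(1-3\eps)$ factor plus an additive constant. Recall that in $\opt_2$ there are at least $\frac{u_\psi}{2}$ items of each class $\psi\in\Psi$ packed into good bins (and, via the extra guessing step, exactly $\beta$ good bins containing non-huge items, with all non-huge-containing good bins having total size in $[1,1+\delta)$). The first move is to deal with the non-huge items that are \emph{not} small: the $|L|\le\lceil\frac{\beta+1}{\eps}\rceil$ large items and the $\beta$ medium items. I would remove all medium items from their good bins and pack each into its own dedicated bin; this destroys at most $\beta$ good bins, but since $\beta$ good bins hold non-huge items and we will want exactly $\beta$ good bins with sand in the rounded-down solution, we instead keep track more carefully. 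Actually the cleaner accounting: in $\opt_2$ there are exactly $\beta$ good bins with non-huge items; in each such bin replace all its small items by an equal-total-size amount of sand (legal since $s(S)$ is the total, and the sum over these $\le\beta$ bins of the small-item sizes packed is at most $s(S)$), and move each large or medium item sitting in such a bin out to a dedicated bin — wait, that would destroy the bin. The right approach is the symmetric analogue of Lemma~\ref{lingroup1}: shift item indices within each class so that rounding \emph{down} to a smaller item in the next group preserves coverage.

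Concretely, for each class $\psi\in\Psi'$ with at least $\frac{1}{\eps^2\delta}$ items, re-index so free items come first, and for a free item of index $i$ in class $\psi$ pack it where $\opt_2$ packed item $i+\lceil 2\eps^2\delta u_\psi\rceil$ (for class $0=L$, using $|L|$ in place of $u_\psi$); since that item has index larger by a full group, its rounded-down size is at most the size of item $i$, so coverage ($\ge 1$) is preserved, and the upper bound stays below $1+3\delta$ because we replace at most $\frac{2}{\delta}$ items each shrinking by at most $\delta^3$, and $1+\delta$-type good bins with sand can only gain a tiny bit. The bins of $\opt_2$ whose items are all ``replaced by items'' survive as good bins; the only losses are (i) at most $\lceil 2\eps^2\delta u_\psi\rceil\le 3\eps^2\delta u_\psi$ items per huge class that have no replacer, (ii) at most $\lceil 2\eps^2\delta|L|\rceil$ large items with no replacer, and (iii) the handling of the $M$ items, which were not part of the rounded-down instance at all. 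For (i), summing and using $\opt_2\ge\frac{\delta}{4}\sum_{\psi\in\Psi}u_\psi$ (each good bin has at most $\frac2\delta$ items, at least $\sum\frac{u_\psi}{2}$ items participate) gives $\sum 3\eps^2\delta u_\psi\le 12\eps^2\opt_2\le\eps\opt_2$, exactly as in Lemma~\ref{lingroup1}.

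The genuinely new part, and the main obstacle, is bookkeeping the large and medium items together with the sand so that the rounded-down solution respects the constraint ``at most $\beta$ good bins with non-zero sand.'' Here is the key observation: a good bin of $\opt_2$ containing non-huge items has total original size in $[1,1+\delta)$; among its non-huge items, those that ended up classified as \emph{small} can be turned into sand of the same total size, and this is legal in the rounded-down instance. There are exactly $\beta$ such bins, so the rounded-down solution will have exactly $\beta$ good bins carrying sand — matching the budget. The large items are treated as a class-$0$ of huge items and go through the same index-shifting linear-grouping argument above; the losses charged there are accounted for by item (ii). The medium items, of which there are exactly $\beta$, are the subtle point: a good bin of $\opt_2$ might contain a medium item and nothing else that is non-huge, or a mix. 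We argue that a medium item has size at most the smallest large item's size (by the sorted order defining $L$ then $M$), and that replacing a medium item in a good bin by sand of the same size is again legal and keeps the bin good with sand. Since there are $\beta$ good bins with non-huge content and $\beta$ medium items, the total number of good bins needing sand stays $\le\beta$; and any good bin whose non-huge content was purely large items survives the index-shift without needing sand. A careful case analysis confirms the ``$\le\beta$ good bins with sand'' invariant, and the additive ``$-1$'' in the statement absorbs the at-most-one not-good bin collecting leftover non-huge items noted at the start of this section.

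Putting it together: starting from $\opt_2$, after (a) normalizing so non-huge-containing good bins have size in $[1,1+\delta)$ and at most one non-good bin carries non-huge items (losing at most $1$ good bin, hence the $-1$), (b) converting small items in good bins to sand of equal total size, (c) applying the index-shift linear grouping to every class in $\Psi'$ (losing at most $\eps\opt_2$ from huge classes and a further $O(\eps\opt_2)$ from the large class, for a combined $\le 2\eps\opt_2$), and (d) replacing medium items in good bins by equal-size sand (no further loss, since the bin stays good and $\beta$-bounded), we obtain a feasible rounded-down solution with at least $\opt_2 - 2\eps\opt_2 - \eps\opt_2 - 1 \ge (1-3\eps)\opt_2 - 1$ good bins, at most $\beta$ of which carry sand. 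By optimality of $\opt'_2$ this yields $(1-3\eps)\opt_2 - 1 \le \opt'_2$, as claimed. The main obstacle, to reiterate, is not any single inequality but the combinatorial bookkeeping of the sand budget across the three non-huge subclasses $L,M,S$ simultaneously with the index-shifting argument for all huge classes; each piece is routine in isolation, but they must be made consistent.
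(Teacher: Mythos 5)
The main gap is in your step~(d): replacing medium items in good bins by ``equal-size sand'' is not in general legal, because the sand budget of the rounded-down instance is $s(S)$, the total size of the \emph{small} items only. The medium items were deliberately excluded from the rounded-down instance, and their mass is not available as sand. For a concrete failure, take $S=\emptyset$, $\beta=1$, and a single good bin of $\opt_2$ holding one huge item of size $1-\delta/2$ and one medium item of size $\delta/2$; after removing the medium item you need sand of size $\delta/2$, but $s(S)=0$. More generally, whenever the small items already sit in good bins, converting them to sand consumes all of $s(S)$ and leaves nothing to substitute for the medium items. So the sand-budget bookkeeping, which you correctly flag as the crux, is not resolved by your step~(d).

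The paper's proof sidesteps this by never using sand for medium items. It proceeds in two stages. First, it works in the rounded-down instance \emph{augmented with the medium items} (whose rounded-down size is defined to equal their original size) and obtains, via the same free-bin index-shift over all classes in $\Psi'$ that you describe, a solution with at least $(1-2\eps)\opt_2-1$ good bins; since the small-item sand is simply carried over unchanged, no budget issue arises. Second, it eliminates the medium items by swapping each one for a \emph{large} item from $L$: every item of $L$ is at least as large as every item of $M$ by the sorted order, so the swap can only raise the bin's total, keeping it $\geq 1$ (and one trims large items if it overshoots $1+3\delta$). The needed supply of large items comes from a pigeonhole argument: $|L|\geq\frac{\beta+1}{\eps}$ (or $M=\emptyset$) large items spread over at most $\beta+1$ bins implies that some $\eps\beta$ good bins together with the one non-good bin holding non-huge items contain at least $\beta\geq|M|$ large items, and sacrificing those $\eps\beta$ good bins costs only $\eps\cdot\opt_2$. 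Your observation that ``a medium item has size at most the smallest large item's size'' is exactly the ingredient needed, but it licenses a large-item swap, not a sand swap.
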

\begin{proof}
We consider the solution $\opt_2$ and modify it in two steps.  First, we consider the rounded-down instance including the items in $M$ (the rounded down size of such item is its original size) and construct a feasible solution $\sol$ whose number of good bins is at least $(1-2\eps)\opt_2-1$. Then, in the second step we construct another solution for the rounded down instance without medium items whose number of near exact covered bins is at least $(1-3\eps)\opt_2-1$.   In both steps the allocation of small items is not modified and we will not use the ability to pack them fractionally when we consider the rounded down instance.  Furthermore, we will say below that every small item $i$ is replaced by itself.

Consider the first step. Assume without loss of generality that in $\opt_2$ every bin of total size at least $1+\delta$ satisfies that the bin consists only of huge items (so every item has size at least $\delta$).  For a bin $\B$ whose total size of items in the original instance is either less than  $1$ or at least $1+\delta$, we leave the items in $\B$ as they were in $\opt_2$.  Next, we claim that the number of those good bins is the same as it were in $\opt_2$.  To show this first note that by rounding down the size of each item, we can only decrease the total size of items, and since we round the size of an item to a size of another item in the same class, such decrease is at most $\delta^3$.  Since by our assumption a good bin $\B$ has no large items and the size of small items is not modified, $\B$ has at most $2/\delta$ huge items, and so this last claim holds.  Consider the other bins, and we refer to such bin as {\em free bin} whose items are called {\em free items}.

Next, for the purpose of this proof we re-index the items of every class of $\h'$ so that the free items appear first (sorted by the original index of the items in the class) and only afterwards the non-free items.  We repack the free bins as follows.  For a class $\psi\in \Psi'$  that has less than $\frac{1}{\eps^2 \delta}$ items, we do not change the assignment of the items of the class (there is no rounding of such class) and we say that every item of the class is replaced by itself.  

For other classes, we apply the following process where we let $u_0=|L|$ be the number of large items.
For a free item of class $\h_{\psi}$ whose index (in the class) is $i$, we pack it in the bin of $\opt_2$  that used to have the item of index $i-\lceil 2\eps^2\delta u_{\psi}\rceil$, and we say that item $i-\lceil 2\eps^2\delta u_{\psi}\rceil$ is replaced by $i$. We note that the rounded size of $i-\lceil 2\eps^2\delta u_{\psi}\rceil$ is at least the (original) size of $i$.
If $i-\lceil 2\eps^2\delta u_{\psi}\rceil\leq 0$ (meaning there is no such item), then we pack $i$ in a new dedicated bin.   

We have that if a bin $\B$ in $\opt_2$ has only items that are replaced by (perhaps other) items, then the total rounded down size of the replacing items of $\B$ (in the new solution) is at least the total original size of items in $\B$ and this is at least $1$.  On the other hand, every such replaced item has a size that is at most $\delta^3$ larger than the item it replaces. Since the free bin $\B$ has less than $\frac 2{\delta}$ huge items, the total (rounded) size of the bin after this replacement is not larger than $1+\delta+\frac{2}{\delta}\cdot \delta^3 <1+3\delta$ if there are no large items in $\B$, so this is a good bin.  

Note that if there are also large items in $\B$, then the last proof shows that the total rounded size of the replaced items of the huge and small items is less than $1+3\delta$, and we can add one large item after the other until the first iteration in which the resulting bin is a good bin.  Since the size of every large item is at most $\delta$, the resulting bin will be indeed a good bin.

It remains to show that the number of free bins containing items that are not replaced by other items is at most $2\eps\opt_2+1$.  First note that every good bin of $\opt_2$ has at most $\frac{2}{\delta}$ huge items and the number of huge items participating in good bins is at least $\sum_{\psi\in \Psi} \frac{u_{\psi}}{2}$.  Therefore, $$\opt_2\geq \frac{\delta}{4} \sum_{\psi\in \Psi} u_{\psi} \ .$$  Second, for a class $\psi\in \Psi$, the number of items that are not replaced is at most $\lceil 2\eps^2\delta u_{\psi}\rceil\leq 2\eps^2\delta u_{\psi}+1$, but such items exist only if $\eps^2\delta u_{\psi} \geq 1$ so we have at most $3\eps^2\delta u_{\psi}$ such items of class $\psi\in \Psi$.
Summing over all such classes, we have at most $$\sum_{\psi\in \Psi} 3\eps^2\delta u_{\psi}$$ such items and this is also a valid upper bound on the number of free bins containing a huge item that is not replaced.
Note that $$\sum_{\psi\in \Psi} 3\eps^2\delta u_{\psi}= 3\eps^2\delta \cdot \sum_{\psi\in \Psi} u_{\psi} \leq 12\eps^2 \opt_2\leq \eps\opt_2$$ where the last inequality holds using $\eps\leq \frac{1}{12}$.
Next consider the number of large items that are not replaced.  These are at most $3\eps^2\delta |L|$ items.  By definition of $L$, we know that $|L|\leq \frac{\beta+1}{\eps} +1$, and $\beta \leq \opt_2$.  So the number of non-replaced large items is at most $$3\eps^2\delta \cdot ( \frac{\beta+1}{\eps} +1) \leq \eps\opt_2 +1$$ using $\delta\leq \frac 14$. This is a valid upper bound on the number of free bins containing large items that are not replaced.  This conclude the first step where we have considered the rounded-down instance including the items in $M$ and construct a feasible solution denoted as $\sol$ whose number of good bins is at least $(1-2\eps)\opt_2-1$.

In the last step of the proof, we modify $\sol$ into a feasible solution that does not pack the items in $M$.  
We replace the medium items in $\sol$ by large items by repacking the items of at most $\eps\beta$ good bins and  other bins containing large items.  In this repacking process every bin containing $i$ medium items will be assigned $i$ large items that were not packed there prior to this step.  This repacking increases the total rounded size of items in the bin so it will be at least $1$. However, if it becomes at least $1+3\delta$, we remove one large item after another until the total rounded size becomes less than $1+3\delta$.  Thus this repacking will generate the required solution for the rounded instance.  

However, we need to show that there is a set of at most $\eps\beta$ good bins such that together with all non-good bins contains at least $|M|$ large items.  This is so, as in $\opt_2$ there is a set of $\beta+1$ bins containing all large items and $|L|\geq \frac{\beta+1}{\eps}$ or $M=\emptyset$.  By the pigeonhole principle, out of the bins of $\opt_2$ with large items, there is a set of $\eps\beta$ good bins such that together with the unique non-good bin having large items we get a subset $BINS$ of the bins with at least $\beta$ large items.  So the claim follows.  
\qed\end{proof}

\subsection{The configuration LP}  

We define a bin configuration to describe a packing of one bin.  These bin configurations will be used to formulate the configuration integer program whose linear programming relaxation is used to direct our algorithm.  Both the definition of bin configurations as well as the configuration integer program differ from the ones we have considered in Section \ref{I1sec} to approximate $I_1$. We would like to approximate $\opt'_2$ and consider the rounded down instance.  

Formally, a bin configuration is a multi-set of sizes of items in $\h'$ together with a non-negative allocated space for sand such that the total rounded down size of the items together with the space for sand is at most $1+3\delta$.   It is clear that without loss of generality we conclude that for every bin configuration corresponding to a good bin either the space for sand is zero, or it is exactly $1$ minus the total size of the items in $\h'$ in this bin configuration.  However, for a multi-set of items of total size smaller than $1$ there are two configurations, one with sand so that it is a good bin and one without sand (and in this case it is not a good bin). Therefore, the information regarding the allocated space for sand in the configuration is implied by the multi-set of items in $\h'$ together with one additional bit of information. Furthermore, we say that a bin configuration has a unit reward if the total size of its items and sand is in $[1,1+3\delta)$ and otherwise it has zero reward.  However, unlike the case of $I_1$, the number of bin configurations is not polynomially bounded and we prove the following upper bound.

\begin{lemma}
There are at most $2\cdot (n+1)^{\frac{1}{\eps^2\delta^4}}$ bin configurations.
\end{lemma}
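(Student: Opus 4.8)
The plan is to encode each bin configuration compactly and then count the possible encodings. As already observed just before the lemma, for every bin configuration the allocated space for sand is either $0$ or exactly $1$ minus the total rounded-down size of the items of $\h'$ in the configuration; hence a bin configuration is fully determined by the multi-set of rounded-down sizes of its items of $\h'$ together with a single bit indicating whether sand is present. It therefore suffices to bound the number of such multi-sets by $(n+1)^{1/(\eps^2\delta^4)}$ and then multiply by $2$.

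First I would bound the number of distinct rounded-down sizes that items of $\h'$ can take. For each class $\psi\in\Psi'$, the linear grouping step partitions $\h_\psi$ into $\frac{1}{\eps^2\delta}$ subsets and assigns to all items of a subset a common rounded-down size (the smallest size occurring in that subset); so class $\psi$ contributes at most $\frac{1}{\eps^2\delta}$ distinct rounded-down sizes. Since $|\Psi'|\le\frac{1}{\delta^3}$ by the range of the class index in the definition of the huge-item classes (using $\delta\le\frac14$), the total number of distinct rounded-down sizes of items of $\h'$ is at most $\frac{1}{\delta^3}\cdot\frac{1}{\eps^2\delta}=\frac{1}{\eps^2\delta^4}$.

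Next I would bound the multiplicities. In any bin configuration the number of copies of a fixed size is a non-negative integer that is at most the number of items of that size in the instance, hence at most $n$; so each of the at most $\frac{1}{\eps^2\delta^4}$ distinct sizes has at most $n+1$ possible multiplicities. Consequently there are at most $(n+1)^{1/(\eps^2\delta^4)}$ multi-sets of sizes of items of $\h'$ realizable in a configuration, and together with the additional bit encoding the presence of sand this gives at most $2\cdot(n+1)^{1/(\eps^2\delta^4)}$ bin configurations.

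There is no genuine obstacle; the only points requiring care are the initial reduction — that the sand amount is determined by the item multi-set plus one bit, which was argued right before the lemma — and the arithmetic $|\Psi'|\cdot\frac{1}{\eps^2\delta}\le\frac{1}{\eps^2\delta^4}$. It is worth emphasizing the contrast with the configuration count for $I_1$: there every bin holds at most $\frac{2}{\delta}$ items because all items have size at least $\delta$, whereas here the large items forming the class $0$ of $\h'$ may be arbitrarily small, so we cannot bound the number of items per bin by a constant and must instead pay the weaker factor $n+1$ for each size, which is exactly why the bound is only quasi-polynomial rather than polynomial in $\frac1\eps$.
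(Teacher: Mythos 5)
Your proof is correct and follows essentially the same approach as the paper's: bound the number of distinct rounded-down sizes in $\h'$ by $\frac{1}{\eps^2\delta^4}$, allow at most $n+1$ choices of multiplicity per size, and multiply by $2$ for the one extra bit determining the sand allocation. You spell out more explicitly than the paper why the number of distinct sizes is at most $\frac{1}{\eps^2\delta^4}$ (namely $|\Psi'|\le\frac{1}{\delta^3}$ classes, each contributing at most $\frac{1}{\eps^2\delta}$ rounded sizes from linear grouping), and your closing remark correctly identifies why the bound here cannot be made polynomial in $\frac1\eps$ unlike in the $I_1$ case — the large items in class $0$ of $\h'$ may be arbitrarily small, so a configuration can contain up to $n$ items.
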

\begin{proof}
The items in $\h'$ have at most $\frac{1}{\eps^2\delta^4}$ different sizes of items, and for each such size the number of items in the configuration is a non-negative integer that is at most $n$.  Every such multi-set of items has at most two configurations where one of those has unit reward and at most one of those has zero reward.  Therefore, a valid upper bound on the number of configurations is $2(n+1)^{\frac{1}{\eps^2\delta^4}}$.
\qed\end{proof}

Observe that this last bound on the number of bin configurations is not polynomial (of a constant degree).  So in order to design an AFPTAS we cannot enumerate all bin configurations. 
Next, we formulate the configuration integer program. 
Here, we consider bin configuration as a vector of non-negative integers where the $i$-th component of the vector is the number of items of the $i$-th size in the multi-set described by the configuration.  We let $\C$ denote the set of all bin configurations.  For $c\in \C$, we let $s(c)$ denote the space for sand in the configuration $c$.   We denote the subset of $\C$ consisting of all configurations with unit reward by $\C_1$.  We let $\sigma$ denote the index set of sizes of items in $\h'$.  For $i\in \sigma$,  let $\nu_i$ denotes the number of items of (rounded down) size $s^i$ in the rounded instance.

The decision variable $x_c$ for a configuration $c\in \C$ is the number of bins with configuration $c$.  For  $i\in \sigma$, we denote by $c_i$ the $i$-th component of configuration $c\in \C$.   The configuration integer program is the integer program stated below.  Here, we use the assumption that if a solution to this integer program does not use all items of a given size (or the full amount of sand in the instance), then additional items can be packed into dedicated bins without changing the objective function value.  This motivates our use of inequalities (instead of equalities) in the constraints (beside the non-negativity constraints) of the configuration integer program.
 
 \begin{eqnarray*}
 \max& \sum_{c\in \C_1} x_c & \\
 s.t. & \sum_{c\in \C} c_i \cdot x_c \leq \nu_i &\forall i \in \sigma \\
 & \sum_{c\in \C_1} s(c) \cdot x_c \leq s(S)& \\
 & \sum_{c\in \C_1: s(c)>0} x_c \leq \beta& \\ 
 & x_c \geq 0& \forall c\in \C \ .
 \end{eqnarray*}
 The number of decision variables is the number of bin configurations. The number of constraints (excluding the non-negativity constraints) is  $|\sigma|+2 \leq \frac{1}{\eps^2\delta^4}+2$.  The {\em configuration LP} is the linear programming relaxation obtained from the above integer program by allowing the variables to be non-integers.

Our algorithm considers the configuration LP without stating its formulation, and solve it approximately within a multiplicative factor of $1-\eps$ using the ellipsoid algorithm via the column-generation approach of \cite{KarKar82}.  We postpone the details of this step and assume that we are given a feasible solution $x^*$ to the configuration LP such that the support of $x^*$ has at most polynomial number of elements, and satisfying that for every other feasible solution $\hat{x}$ for the configuration LP, we have $\sum_{c\in \C_1} x^*_c \geq (1-\eps) \cdot \sum_{c\in \C_1} \hat{x}_c$. Furthermore, using a basis-crashing algorithm \cite{beling98} we can assume that the support of $x^*$ has at most $\frac{1}{\eps^2\delta^4}+2$ elements. We have that $(1-\eps) \cdot \opt'_2 \leq \sum_{c\in \C_1} x^*_c$ as we show next using the fact that $\opt'_2$ defines an integer feasible solution to the configuration LP.

\begin{lemma}\label{lemopt2}
We have $(1-\eps) \cdot \opt'_2 \leq \sum_{c\in \C_1} x^*_c$.
\end{lemma}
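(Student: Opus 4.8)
The plan is to realize $\opt'_2$ as an integer feasible solution of the configuration LP whose objective value is exactly $\opt'_2$, and then to invoke the approximate-optimality property of $x^*$ stated just before the lemma. First I would take the solution $\opt'_2$ for the rounded-down instance and normalize it without loss of generality: repack the items of any bin of total rounded size at least $1+3\delta$ into dedicated bins (one item, or a unit of sand, per bin), which does not decrease the number of good bins; and move all sand out of every bin that is not a good bin, which again cannot decrease the number of good bins. After these clean-up steps every bin of $\opt'_2$ has total rounded size strictly less than $1+3\delta$, sand appears only in good bins, and the number of good bins with non-zero sand is still at most $\beta$.

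Next, for each bin of the normalized $\opt'_2$ I would read off its bin configuration: the multi-set of rounded-down sizes of the items of $\h'$ packed in the bin, together with the amount of sand allocated to it. By the normalization each such multi-set-with-sand is a legitimate element of $\C$, and every good bin, having total (rounded) size in $[1,1+3\delta)$, yields a configuration in $\C_1$. I then set $x^o_c$ to be the number of bins of $\opt'_2$ with configuration $c$ and verify feasibility. The constraints $\sum_{c\in\C} c_i\, x^o_c \le \nu_i$ hold because each item of $\h'$ in the rounded-down instance is packed into exactly one bin; the constraint $\sum_{c\in\C_1} s(c)\, x^o_c \le s(S)$ holds because after the clean-up all sand lies in good bins and its total amount is $s(S)$; and the constraint $\sum_{c\in\C_1:\, s(c)>0} x^o_c \le \beta$ is precisely the defining constraint of $\opt'_2$. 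Hence $x^o$ is a non-negative integer feasible solution to the configuration LP with $\sum_{c\in\C_1} x^o_c = \opt'_2$.

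To conclude, I would apply the property of $x^*$ with $\hat{x} = x^o$: since $\sum_{c\in\C_1} x^*_c \ge (1-\eps)\sum_{c\in\C_1}\hat{x}_c$ for every feasible $\hat{x}$, we obtain $\sum_{c\in\C_1} x^*_c \ge (1-\eps)\,\opt'_2$, which is the claim. The feasibility checks are routine; the only place that needs a little care is the sand bookkeeping, namely arguing cleanly that one may assume sand occurs only in good bins so that the $s(S)$-constraint and the $\beta$-constraint both refer to the same set of bins, and that breaking up oversized bins does not affect the good-bin count or the $\beta$-bound. I expect this normalization to be the main (and only mildly delicate) obstacle; everything else follows directly from the definitions.
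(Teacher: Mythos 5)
Your proposal is correct and follows essentially the same route as the paper's proof: construct the integer feasible solution $x^o$ directly from $\opt'_2$ (after a harmless normalization so every bin has total rounded size below $1+3\delta$ and sand sits only in good bins), verify the three families of LP constraints and that the objective equals $\opt'_2$, and then invoke the $(1-\eps)$-approximation guarantee of $x^*$. The paper states the normalization more briefly as a without-loss-of-generality remark and is slightly more explicit about choosing between the two copies of a configuration (with and without sand) when the multi-set is the same, but the substance of the argument is identical.
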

\begin{proof}
Based on $\opt'_2$ we define a bin configuration for every bin of this solution, that is, the multi-set of sizes of $\h'$ packed into this bin.  Observe that without loss of generality every bin in this solution has items of total size at most $1+3\delta$. So indeed there is a configuration in $\C$ with this multi-set of sizes where we may have identical configurations where one copy belongs to $\C_1$ with non-zero sand packed into it and the other copy to $\C\setminus \C_1$. We pick the first such copy if the bin is a good bin in $\opt'_2$ and the second copy if it is not a good bin.  Then, we let the integer feasible solution $x^o$ be defined as follows.  For every $c\in \C$, the value of $x^o_c$ is the number of bins in the solution $\opt'_2$ with configuration $c$.  Since every item is packed into exactly one bin, the constraint $\sum_{c\in \C} c_i \cdot x_c \leq \nu_i$ is satisfied for all $i\in \sigma$.  Similarly the sand resulting from small items is packed once, so the constraint $\sum_{c\in \C_1} s(c) \cdot x_c \leq s(S)$ is also satisfied.  Last, the number of unit reward bins with some small items in $\opt'_2$ is at most $\beta$ by the guessing step. So indeed this is a feasible integer solution for the configuration LP.  Its objective function value as a solution to this linear program is exactly $\opt'_2$. The claim holds as $x^*$ is a $1-\eps$ approximation for the linear program.
\qed\end{proof}
 
We let $x'_c=\lfloor x^*_c \rfloor$ for all $c\in \C$. We pack a subset of items based on this integer solution $x'$ using the following process.  For every $c\in \C$ we have $x'_c$ bins packed according to $c$.  For each such bin $\B$ packed according to $c$ and every size $i \in \sigma$, we pick $c_i$ items of the $i$-th size and pack them into $\B$.  These picked items are not picked to other bins.  We have enough items of each size as $c$ is non-negative for all $c\in \C$ and $x'_c\leq x^*_c$ so $\sum_{c\in \C} c_i \cdot x'_c \leq  \sum_{c\in \C} c_i \cdot x^*_c \leq \nu_i$ (for all $i$).  Additional items of $\h'$ that were not packed by this process are packed into dedicated bins and do not contribute to the objective function value of $x'$. The sand is allocated only to bins where the corresponding configuration belongs to $\C_1$ and such bin gets sand of size that is the minimum amount for which the total size of its items and sand will be at least $1$.  Since $s(c)$ is non-negative for all configurations, we have $\sum_{c\in \C_1} s(c) \cdot x'_c \leq \sum_{c\in \C_1} s(c) \cdot x^*_c \leq s(S)$.  Similarly, since $x'\leq x^*$, we conclude that $ \sum_{c\in \C_1: s(c)>0} x'_c \leq  \sum_{c\in \C_1: s(c)>0} x^*_c\leq \beta$.
  Therefore, $x'$ satisfies the constraints of the configuration LP.  Thus, we can apply Lemma  \ref{lingroup2a}, and obtain a feasible solution to the original instance whose number of near exact covered bins is exactly the objective function value of $x'$ to the configuration LP.  We let  $\apx_2$ be the resulting solution that is a feasible solution to the original instance whose number of near exact covered bins satisfies the following lower bound.
\begin{lemma}
The solution $\apx_2$ has at least $(1-3\eps)\cdot (1-\eps) \cdot \opt_2  -  \frac{1}{\eps^2\delta^4}-3$ near exact covered bins.
\end{lemma}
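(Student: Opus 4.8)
The plan is to simply chain together the three quantitative bounds already established, tracking the additive losses. Recall that $\apx_2$ is obtained by applying Lemma~\ref{lingroup2a} to the packing induced by the integer vector $x'$, so its number of near exact covered bins equals the objective value $\sum_{c\in\C_1} x'_c$ of $x'$ in the configuration LP; this is legitimate precisely because the discussion preceding the statement already verified that $x'$ is feasible for the configuration LP (all three non-trivial constraints are preserved under rounding down since the coefficients $c_i$, $s(c)$ are non-negative and $x'\le x^*$), and that the number of its good bins with positive sand is at most $\beta$.

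First I would bound the loss incurred by rounding $x^*$ down to $x'=\lfloor x^*\rfloor$. Since $x^*$ is taken (after basis-crashing via \cite{beling98}) to have support of size at most $\frac{1}{\eps^2\delta^4}+2$, and each rounded-down coordinate loses strictly less than $1$, we get
$$\sum_{c\in\C_1} x'_c \;\ge\; \sum_{c\in\C_1} x^*_c \;-\;\Big(\tfrac{1}{\eps^2\delta^4}+2\Big).$$
Next I would invoke Lemma~\ref{lemopt2}, which gives $\sum_{c\in\C_1} x^*_c \ge (1-\eps)\cdot\opt'_2$, and then Lemma~\ref{lingroup2}, which gives $\opt'_2 \ge (1-3\eps)\cdot\opt_2 - 1$. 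Substituting,
$$\apx_2 \;\ge\; (1-\eps)\opt'_2 - \tfrac{1}{\eps^2\delta^4}-2 \;\ge\; (1-\eps)\big((1-3\eps)\opt_2-1\big) - \tfrac{1}{\eps^2\delta^4}-2,$$
and since $(1-\eps)\le 1$, the term $-(1-\eps)$ is at least $-1$, so the right-hand side is at least $(1-\eps)(1-3\eps)\opt_2 - \tfrac{1}{\eps^2\delta^4}-3$, which is exactly the claimed bound.

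There is no real obstacle here: the lemma is a bookkeeping corollary that collects the approximation guarantee of the column-generation LP solver (Lemma~\ref{lemopt2}), the grouping/rounding-down loss (Lemma~\ref{lingroup2}), and the integrality-rounding loss controlled by the bounded support of a basic (or basis-crashed) solution. The only point requiring a small amount of care is to make sure the feasibility of $x'$ for all three constraints has been checked so that Lemma~\ref{lingroup2a} is applicable to turn the LP objective back into near exact covered bins of the original instance $I_2$ — but this has already been done in the paragraph preceding the statement, so the proof is just the displayed chain of inequalities above.
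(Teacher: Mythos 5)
Your proof is correct and follows exactly the same chain of inequalities as the paper: identify $\apx_2$ with $\sum_{c\in\C_1} x'_c$ via Lemma~\ref{lingroup2a}, bound the rounding-down loss by the support size $\frac{1}{\eps^2\delta^4}+2$, then apply Lemma~\ref{lemopt2} and Lemma~\ref{lingroup2}. (In fact you cite Lemma~\ref{lingroup2} for the last step, which is the correct reference; the paper's text attributes it to Lemma~\ref{lingroup2a}, an apparent typo.)
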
   
\begin{proof} We have the following.
\begin{eqnarray*}
\apx_2 & =& \sum_{c\in \C_1} x'_c \\
& \geq & \sum_{c\in \C_1} x^*_c - \frac{1}{\eps^2\delta^4} -2 \\
& \geq & (1-\eps) \cdot \opt'_2 -  \frac{1}{\eps^2\delta^4} -2 \\
& \geq & (1-3\eps)\cdot (1-\eps) \cdot \opt_2  -  \frac{1}{\eps^2\delta^4}-3,
\end{eqnarray*}
 where the first inequality holds as $x^*$ is a basic solution for the linear program, the second inequality by Lemma \ref{lemopt2}, and the last inequality by Lemma \ref{lingroup2a}. 
 \qed\end{proof}
  It remains to show that we can indeed solve the configuration LP within a multiplicative factor of $1-\eps$ in polynomial time as we prove next.

\subsection{Approximating the configuration LP.}  Here, we refer to the configuration LP as the {\em primal LP}. Since the dimension of this linear program is exponential in $\frac{1}{\eps}$, we consider its dual linear program and refer to it as the dual LP.  This dual LP has number of variables that is polynomial in $\frac{1}{\eps}$ and $\frac{1}{\delta}$, and we plan to use the ellipsoid algorithm to solve this dual LP.  However, the number of constraints of the dual LP is the number of configurations, and if we try to use separation oracle for deciding if the current dual solution is feasible or not, we get an NP-complete problem.  We will use the column-generation method of \cite{KarKar82} to overcome this difficulty.

\paragraph{The dual linear program.} First, we state the dual linear program.  We have a dual variable $y_i$ associated with the primal constraint 
$\sum_{c\in \C} c_i \cdot x_c \leq \nu_i $.  In addition to these dual variables, we let $z_1$ be the dual variable associated with $\sum_{c\in \C_1} s(c) \cdot x_c \leq s(S)$ and $z_2$ be the dual variable associated with $\sum_{c\in \C_1: s(c)>0} x_c \leq \beta$.  The dual LP is the following linear program (once again the algorithm does not list all constraints of this linear program).
\begin{eqnarray}
\min & \sum_{i\in \sigma} \nu_i \cdot y_i + s(S) \cdot z_1 + \beta\cdot z_2& \\
s.t.& \sum_{i\in \sigma} c_i \cdot y_i \geq 0& \forall c\in \C\setminus \C_1 \label{cons1}\\
&  \sum_{i\in \sigma} c_i \cdot y_i + s(c) \cdot z_1 +z_2 \geq 1& \forall c\in \C_1: s(c) >0 \label{cons2}\\
&  \sum_{i\in \sigma} c_i \cdot y_i \geq 1& \forall c\in \C_1: s(c) =0 \label{cons3}\\
& y_i,z_1,z_2 \geq 0& \forall i
\end{eqnarray}

In order to utilize the ellipsoid algorithm for finding a $1+\eps$ approximated solution for the dual LP, we will design approximated separation oracle that given an assignment of values to the decision variables $(\tilde{y},\tilde{z})$ decides in polynomial time if the vector resulting by multiplying every component by $1+\eps$, that is, $(1+\eps) \cdot (\tilde{y},\tilde{z})$ is a feasible solution for the dual LP, or provide a dual constraint that is violated by  $(\tilde{y},\tilde{z})$ namely a configuration $c\in \C$ for which the corresponding constraint is violated.  The time complexity of this separation oracle need to be polynomial in  $n,\frac{1}{\eps}$, and in the binary encoding length of $I_2$.

Given such vector $(\tilde{y},\tilde{z})$, we check in linear time that all its components are non-negative.  If one of these variables is negative, we are done as we found a constraint that is violated by this solution.  Otherwise, we observe that since for every configuration $c\in \C$ the components of $c$ are non-negative, the constraints \eqref{cons1} are satisfied by $(\tilde{y},\tilde{z})$.  In what follows, we denote the $i$-th size in $\sigma$ by $s^i$. 

We define the value of $\tilde{y}_i$ as the value of item of size $s^i$.  A configuration is a multi-set of items of sizes in $\sigma$ whose value is $\sum_{i\in \sigma} \tilde{y}_i \cdot c_i$ and whose size is $\sum_{i\in \sigma} c_i \cdot s^i$.  We treat the values $c_i$ as decision variables of the separation oracle while $\tilde{y}_i, s^i, \tilde{z}_1,\tilde{z}_2$ for all $i$ are constants.  We first round up the values of $\tilde{y}_i$ to the next integer multiple of $\frac{\eps}{n}$, and denote the corresponding rounded value by $y'_i$.  Since every configuration has at most $n$ items in the multi-set, we observe that if $y',\tilde{z}_1,\tilde{z_2}$ satisfies all the constraints of the form constraint \eqref{cons2} and \eqref{cons3}, then $(1+\eps) \cdot \tilde{y}, \tilde{z}_1,\tilde{z_2}$ also satisfies all these constraints. Furthermore, if there exists a constraint in this family that is not satisfied by $(1+\eps) \cdot \tilde{y}, \tilde{z}_1,\tilde{z_2}$,  then $y',\tilde{z}_1,\tilde{z_2}$ violates at least one of these constraints.  

In the following we partition our treatment to configurations $c\in \C_1$ with $s(c)=0$ and the size of $c$ is in the interval $[1,1+\delta)$ (first case of constraint \eqref{cons3}), configurations $c\in \C_1$ with $s(c)=0$ and the size of $c$ is in the interval $[1+\delta,1+3\delta)$ (second case of constraint \eqref{cons3}), and last to configuration $c\in \C_1$ with $s(c)>0$ (constraint \eqref{cons2}).  

\paragraph{The approximated separation oracle for constraint \eqref{cons3} corresponding to configurations with size in $[1,1+\delta)$.} 
Consider the subset of configurations $c\in \C_1$ of items in $\h'$ with size in $[1,1+\delta)$.  The motivation for the following oracle is that in this case we can round up a little bit the sizes and still get a good bin.  For this case, we define the modified size of an item of size $s^i$ as $\hat{s}^i = \lceil \frac{n \cdot s^i}{\delta}\rceil \cdot \frac{\delta}{n}$.  Since every configuration has at most $n$ items, we are guaranteed that if a subset of items has total modified size in the interval $[1,1+\delta)$, then its total size is at most $1+2\delta$, so it is a good bin.   So we consider the following set of integer programs in the vector of decision variables $c$ where the integer program denoted as $IP^1(\iota,\iota')$ is parameterized by two parameters $\iota,\iota'$.

\begin{eqnarray}
\max & \sum_{i\in \sigma} c_i \cdot s^i &  \\
s.t. & \sum_{i\in \sigma} c_i \cdot y'_i = \iota & \label{total-value-cons1} \\
&  \sum_{i\in \sigma} c_i \cdot \hat{s}^i = \iota'& \label{total-mod-size-cons1} \\
& 0 \leq c_i \leq \nu_i & \forall i\in \sigma \label{box-cons1}
\end{eqnarray}
 We solve these problems for every pair of values of $\iota,\iota'$ that are non-negative integer multiples of $\frac{\eps \delta}{n}$ subject to the constraint $\iota,\iota' \leq 2$.
Observe that if we multiply the two constraints \eqref{total-value-cons1} and \eqref{total-mod-size-cons1} by $\frac{n}{\eps\delta}$ we get an equivalent integer program where there are only two constraints (excluding the box constraints of the form \eqref{box-cons1}) and the constraint matrix has only non-negative integer entries  that are at most $\frac{2n}{\eps\delta}$.  Such integer programs can be solved in polynomial time (polynomial in $n,\frac{1}{\delta},\frac{1}{\eps}, |\sigma|$) using e.g. the algorithms of \cite{EW20,JR19}.   Since there are polynomial number of pairs of values $(\iota,\iota')$ satisfying our conditions, we solve all these problems in polynomial time.  We denote by $c^{(\iota,\iota')}$ an optimal solution for the integer program $IP^1(\iota,\iota')$.
 For each pair $(\iota,\iota')$, we check if the configuration $c^{(\iota,\iota')}$ has total size at least $1$ and less than $1+3\delta$, and if so, we check if its corresponding constraint \eqref{cons3} is violated.  
 
 If we found such a violating constraint, we are done.  Otherwise, we argue next that all constraints of the form \eqref{cons3} corresponding to configurations with size in the interval $[1,1+\delta)$ are satisfied.  To see this last claim, assume by contradiction that there is a configuration $c'\in \C_1$ with $s(c')=0$ and total size in the interval $[1,1+\delta)$ whose corresponding constraint is violated.  Let $(\iota,\iota')$ be the pair for which $c'$ is a feasible solution to $IP^1(\iota,\iota')$.  Since $c^{(\iota,\iota')}$ is optimal for this integer program we have that $\sum_{i\in \sigma} c^{(\iota,\iota')}_i s^i \geq \sum_{i\in \sigma} c'_i s^i\geq 1$, and so by our assumption the constraint \eqref{cons3} corresponding to $c^{(\iota,\iota')}$ is not violated.  However, since the constraint corresponding to $c'$ is violated, so $\iota<1$. Therefore, the constraint corresponding to $c^{(\iota,\iota')}$ is also violated.  This contradicts the assumption, so the claim holds.

\paragraph{The approximated separation oracle for constraint \eqref{cons3} corresponding to configurations with size in $[1+\delta,1+3\delta)$.} Consider the subset of configurations $c$ of items in $\h'$ with size in $[1+\delta,1+3\delta)$.  The motivation for the following oracle is that in this case we can round down a little bit the sizes and still get a good bin. Note the difference with the earlier case where decreasing the size of a configuration whose size was slightly larger than $1$ could potentially make it smaller than $1$ so the bin is no longer a good bin.  

For this case, we define the modified size of an item of size $s^i$ as $\hat{s}^i = \lfloor \frac{n \cdot s^i}{\delta}\rfloor \cdot \frac{\delta}{n}$.  Since every configuration has at most $n$ items, we are guaranteed that if a subset of items has total size in the interval $[1+\delta,1+3\delta)$, then its total modified size is in the interval $[1,1+3\delta)$.  So we consider the following set of integer programs denoted as $IP^2(\iota,\iota')$ in the vector of decision variables $c$ where the integer program is parameterized by two parameters $\iota,\iota'$.

\begin{eqnarray}
\min & \sum_{i\in \sigma} c_i \cdot s^i &  \\
s.t. & \sum_{i\in \sigma} c_i \cdot y'_i = \iota & \label{total-value-cons2} \\
&  \sum_{i\in \sigma} c_i \cdot \hat{s}^i = \iota'& \label{total-mod-size-cons2} \\
& 0 \leq c_i \leq \nu_i & \forall i\in \sigma \label{box-cons2}
\end{eqnarray}
We solve these problems for every pair of values of $\iota,\iota'$ that are non-negative integer multiples of $\frac{\eps \delta}{n}$ subject to the constraint $\iota,\iota' \leq 2$.
Observe that if we multiply the two constraints \eqref{total-value-cons2} and \eqref{total-mod-size-cons2} by $\frac{n}{\eps\delta}$, we get an equivalent integer program where there are only two constraints (excluding the box constraints of the form \eqref{box-cons2}) and the constraint matrix has only non-negative integer entries  that are at most $\frac{2n}{\eps\delta}$.  Such integer programs can be solved in polynomial time (polynomial in $n,\frac{1}{\delta},\frac{1}{\eps}, |\sigma|$) using \cite{EW20,JR19}.    Since there are polynomial number of pairs $(\iota,\iota')$ satisfying our conditions, we can indeed solve all these problems in polynomial time.  We denote by $c^{(\iota,\iota')}$ an optimal solution for the integer program $IP^2(\iota,\iota')$.
 For each pair $(\iota,\iota')$ we check if the configuration $c^{(\iota,\iota')}$ has total size at least $1$ and less than $1+3\delta$, and if so, we check if its corresponding constraint \eqref{cons3} is violated.  
 
 If we found such a violating constraint, we are done.  Otherwise, we argue next that all constraints of the form \eqref{cons3} corresponding configurations of size in the interval $[1+\delta,1+3\delta)$ are satisfied.  To see this last claim, assume by contradiction that there is a configuration $c'\in \C_1$ with $s(c')=0$ and total size in the interval $[1+\delta,1+3\delta)$ whose corresponding constraint \eqref{cons3} is violated.  Let $(\iota,\iota')$ be the pair for which $c'$ is a feasible solution to $IP^2(\iota,\iota')$ and there is such a pair of values with $\iota'\geq 1$ since every such configuration of total size in the interval $[1+\delta,1+3\delta)$ has total modified size in the interval $[1,1+3\delta)$.  Since $c^{(\iota,\iota')}$ is optimal for this integer program we have that $\sum_{i\in \sigma} c^{(\iota,\iota')}_i s^i \leq \sum_{i\in \sigma} c'_i s^i < 1+3\delta$. Therefore, as the size of $c^{(\iota,\iota')}_i$ is at least $\iota'\geq 1$, by our assumption, the constraint corresponding to $c^{(\iota,\iota')}$ is not violated so $\iota\geq 1$.  However, since the constraint corresponding to $c'$ is violated, $\iota<1$ and this is a contradiction, so the claim holds.

\paragraph{The approximated separation oracle for constraint \eqref{cons2}.}  Here, we define the modified size of an item of size $s^i$ as rounded down, namely, as $\hat{s}^i = \lfloor \frac{n \cdot s^i}{\delta}\rfloor \cdot \frac{\delta}{n}$.  Since every configuration has at most $n$ items, we are guaranteed that if a subset of items has total modified size of at most $1$, then its total size is at most $1+\delta$.  So we consider the following set of integer programs denoted as $IP^3(\iota,\iota')$ in the vector of decision variables $c$ where the integer program is parameterized by two parameters $\iota,\iota'$.

\begin{eqnarray}
\max & \sum_{i\in \sigma} c_i \cdot s^i &  \\
s.t. & \sum_{i\in \sigma} c_i \cdot y'_i = \iota & \label{total-value-cons} \\
&  \sum_{i\in \sigma} c_i \cdot \hat{s}^i = \iota'& \label{total-mod-size-cons} \\
& 0 \leq c_i \leq \nu_i & \forall i\in \sigma \label{box-cons}
\end{eqnarray}
We solve these problems for every pair of values of $\iota,\iota'$ that are non-negative integer multiples of $\frac{\eps\delta}{n}$ subject to the constraints that  $\iota<1$ and $\iota' \leq 1$. 
Observe that if we multiply the two constraints \eqref{total-value-cons} and \eqref{total-mod-size-cons} by $\frac{n}{\eps\delta}$, we again obtain an equivalent integer program where there are only two constraints (excluding the box constraints of the form \eqref{box-cons}) and the constraint matrix has only non-negative integer entries  that are at most $\frac{n}{\eps\delta}$.  So it can be solved using \cite{EW20,JR19}.   Since there are polynomial number of pairs $(\iota,\iota')$ satisfying our conditions, we can indeed solve all these problems in polynomial time.  We denote by $c^{(\iota,\iota')}$ an optimal solution for the integer program $IP^3(\iota,\iota')$.
 For each pair $(\iota,\iota')$, we check if the configuration $c^{(\iota,\iota')}$ has total size smaller than $1$, and if so, we check if its corresponding constraint \eqref{cons2} is violated.  
 
 If we found such a violating constraint, we are done. If we have found a constraint from the family \eqref{cons3} that is violated using the earlier cases, we are also done. Otherwise, we argue next that all constraints of the form \eqref{cons2} are satisfied.  To see this last claim, assume by contradiction that there is a configuration $c'\in \C_1$ with $s(c')>0$ whose corresponding constraint \eqref{cons2} is violated.  Let $(\iota,\iota')$ be the pair for which $c'$ is a feasible solution to $IP^3(\iota,\iota')$.  Since $c^{(\iota,\iota')}$ is optimal for this integer program, we have that $s(c^{(\iota,\iota')}) \leq s(c')$, and so $\iota + s(c^{(\iota,\iota')}) \cdot z_1+z_2 \leq \iota+ s(c')\cdot z_1+z_2$. Since the constraint corresponding to $c'$ is violated, the last term is smaller than $1$. Therefore, if $s(c^{(\iota,\iota')})>0$, then the constraint corresponding to configuration $c^{(\iota,\iota')}$ is also violated.  If however, $s(c^{(\iota,\iota')})=0$, then using $\iota<1$, we get that the constraint \eqref{cons3} corresponding to  $c^{(\iota,\iota')}$ is violated. This contradicts the assumption, so the claim holds.

\bibliographystyle{abbrv}

\end{document}